
\typeout{IJCAI--25 Instructions for Authors}


\documentclass{article}
\pdfpagewidth=8.5in
\pdfpageheight=11in

\usepackage{ijcai25}

\usepackage{times}
\usepackage{soul}
\usepackage{url}
\usepackage[hidelinks]{hyperref}
\usepackage[utf8]{inputenc}
\usepackage[small]{caption}
\usepackage{graphicx}
\usepackage{amsmath}
\usepackage{amsthm}
\usepackage{booktabs}
\usepackage{algorithm}
\usepackage{algorithmic}
\usepackage[switch]{lineno}

\usepackage{epstopdf}
\usepackage{multirow}
\usepackage{amsfonts}
\usepackage{amssymb}
\usepackage{soul}
\usepackage{color}
\usepackage{threeparttable}
\usepackage{subcaption}

\usepackage{epsfig}
\usepackage{array}
\usepackage{utfsym}
\usepackage{diagbox}


\urlstyle{same}



\newtheorem{proposition}{Proposition}





\pdfinfo{
/TemplateVersion (IJCAI.2025.0)
}

\title{Heterogeneous Value Decomposition Policy Fusion for Multi-Agent Cooperation}



\author{
Siying Wang$^1$
\and
Yang Zhou$^2$\and
Zhitong Zhao$^2$\and
Ruoning Zhang$^2$\and\\
Jinliang Shao$^1$\and
Wenyu Chen$^2$\and
Yuhua Cheng$^1$\thanks{Corresponding author}\\
\affiliations
$^1$School of Automation Engineering, University of Electronic Science and Technology of China\\
$^2$School of Computer Science and Engineering, \\
University of Electronic Science and Technology of China\\
\emails
siyingwang@uestc.edu.cn,
\{zhouy, zhaozhitong, zhangruoning\}@std.uestc.edu.cn,\\
\{jinliangshao, cwy, yhcheng\}@.uestc.edu.cn,
}

\begin{document}

\maketitle

\begin{abstract}
    Value decomposition (VD) has become one of the most prominent solutions in cooperative multi-agent reinforcement learning. Most existing methods generally explore how to factorize the joint value and minimize the discrepancies between agent observations and characteristics of environmental states. However, direct decomposition may result in limited representation or difficulty in optimization. Orthogonal to designing a new factorization scheme, in this paper, we propose Heterogeneous Policy Fusion (HPF) to integrate the strengths of various VD methods. We construct a composite policy set to select policies for interaction adaptively. Specifically, this adaptive mechanism allows agents’ trajectories to benefit from diverse policy transitions while incorporating the advantages of each factorization method. Additionally, HPF introduces a constraint between these heterogeneous policies to rectify the misleading update caused by the unexpected exploratory or suboptimal non-cooperation. Experimental results on cooperative tasks show HPF’s superior performance over multiple baselines, proving its effectiveness and ease of implementation.
\end{abstract}

\section{Introduction}\label{}

Multi-Agent Reinforcement Learning (MARL)~\cite{jaques2019social,mei2023mac}, derived from reinforcement learning methods~\cite{sutton_reinforcement_2018}, has proven to be successful in various tasks that involve cooperation and collaboration among multiple agents, such as coordination of robots~\cite{perrusquia2021multi}, autonomous Unmanned Aerial Vehicles (UAVs)~\cite{9712866}, and traffic signal control~\cite{yang2023causal}. These tasks often require agents to possess the ability to make distributed sequential decisions while considering collaboration with each other to collectively achieve a common goal. However, learning such effective cooperation policies among these distributed agents remains a major challenge due to the partial observability and non-stationarity in the environment resulting from the changing policies~\cite{oliehoek_concise_2016,tan1993multi}.

A fully centralized information processing agent seems to be a good solution to address the above issues. Nevertheless, it encounters the challenge of an exponentially expanding joint action space with the increase in the number of collaborative agents. To cope with this situation, the Centralized Training with Decentralized Execution (CTDE) framework was put forward. The MARL algorithms embrace CTDE to train the agents' policies with access to the global state, mitigating the non-stationarity and unstable training. Meanwhile, decentralized execution enables scalability by depending on local information. Within this framework, value-decomposition (VD) algorithms estimate the true joint action value function by factoring it into decentralized utility functions and satisfying the Individual-Global-Max (IGM) principle. Additionally, IGM ensures consistency between optimal local and global actions. The core idea is that If the factored versions are accurate, the optimal joint actions can be determined through local greedy operations. However, a particular type of VD approach faces representation limitations by restricting its mixing network parameters to be positive to meet the IGM condition, which simplifies learning but can hinder finding the optimal joint actions.

Another type of VD method endeavors to construct a surrogate target by leveraging the local optimal actions of distributed agents as prior conditions, and narrows the disparity between the surrogate objective and the true joint action-value function to overcome representation limitations. However, in more complex cooperative tasks, this gap may never be fully bridged. It is challenging for the model to simultaneously learn and estimate the true and surrogate joint action-value functions during dynamic interactions between agents, requiring careful calibration of the update weights for both objectives, or incorporating additional external features to enhance the accuracy of joint value function estimation. These limitations result in VD methods of this type either requiring more complex designs or encountering problems like low training efficiency and poor agent collaboration.

To overcome the aforementioned limitations, this paper proposes a Heterogeneous Policy Fusion (HPF) scheme to integrate the benefits of different types of VD methods. Our work shares the same objective of effectively enhancing the training efficiency of MARL algorithms. However, we focus on constructing a novel composite policy to interact with the environment and gather more valuable data instead of developing a new VD scheme. Specifically, HPF expands different VD methods into a policy set with an adaptive composition based on the value estimation, which determines the degree of participation of each candidate policy. To prevent unexpected exploration and suboptimal non-cooperation, an instructive constraint based on KL divergence is imposed between the utility functions of the candidate VD methods. This constraint prevents excessive divergence between their policies, helping to correct the misleading updates in the model. Our contributions can be summed up as threefold:

\begin{itemize}
    \item we analyze the connections and differences among different types of VD methods, and highlight the value of properly integrating these methods in order to enjoy the best of each part, a perspective that is unique and orthogonal to developing a completely new VD scheme.
    \item we propose a simple VD policy extension scheme to leverage the benefits of heterogeneous VD policies, and show that HPF can effectively enhance the training efficiency of the collaboration model.
    \item we verify the effectiveness of HPF by comparing it against different VD methods across several cooperative multi-agent tasks, to demonstrate its superiority to tackle issues like \textit{Relative Overgeneralization}, showcasing its remarkable performance.
\end{itemize}


\section{Related Works}\label{}
Value decomposition is effective in addressing challenges such as partial observability, algorithmic instability, and credit assignment in cooperative environments.
Our focus lies in integrating the advantages of VD methods based on the Individual-Global-Maximum (IGM)~\cite{son2019qtran} principle. Categorized by the different ways of satisfying IGM, the methods fall into two groups: one constrained by network structure, the other constructing a surrogate target and involving information replenishment.

The first category of VD methods decomposes the joint action-value function by imposing constraints on the network structure, such as \textit{additivity} or \textit{monotonicity}. Following the IGM principle, VDN~\cite{sunehag2017value} decomposes the joint value function $Q_{tot}$ into the sum of individual value functions $\sum [Q_{i}]_{i=1}^{n}$, neglecting additional information available during training. While QMIX~\cite{rashid_qmix_2018} imposes monotonicity constraints on the mixing network, making the $Q_{tot}$ and $Q_{i}$ satisfy the IGM principle.
Qatten~\cite{yang2020qatten} theoretically derives the process of decomposing the joint Q-value ($Q_{tot}$) into local Q-values ($Q_{i}$).
VGN~\cite{wei2022vgn} takes into account the influence of agents with the minimum Q-value and utilizes graph attention networks to introduce dynamic relationships between agents.
GraphMIX~\cite{naderializadeh2020graph} introduces graph network and attention mechanisms to the QMIX, integrating the information of agents through the edge weights controlled by attentional relationships between agents.
TransMix~\cite{khan2022transformer} propose a Transformer-based mixture network, emphasizing the extraction of global and local contextual interactions among $Q_i$, historical trajectories, and global states.
The recent MDQ~\cite{ding2023multi} addresses cooperative, competitive, or mixed tasks by combining mean-field theory with VD.
Inspired by the human nervous system, HAVEN~\cite{xu2023haven} devised a two-level QMIX strategy for inter-level and intra-level coordination.
Restricting the network parameters to achieve IGM can lead to representation limitation of the joint action-value function, making it hard to cover the optimal joint actions.

Another category of VD methods satisfies IGM by constructing a surrogate target and supplementing additional information to approximate the true joint action-value function.
QTRAN~\cite{son2019qtran} introduces a soft regularization constraint, wherein the theoretically learned overall reward function $Q_{tran}$ is required to be equal to the true value of the overall reward function $Q_{tot}$ only when the optimal action $\Bar{u}$ is taken. 
While QTRAN++~\cite{son2020qtran++} explicitly specifies the relationship $Q_{tot}(s,\tau,\Bar{u})=Q_{tran}(s,\tau,\Bar{u})>Q_{tran}(s,\tau,u)>Q_{tot}(s,\tau,u)$ and introduces the corresponding loss for $Q_{tot}$ to expedite training. WQMIX~\cite{rashid2020weighted} uses weighted projection to improve QMIX and reduce the learning weight of non-optimal joint actions. QPLEX~\cite{wang2021qplex} converts the IGM principle into consistency constraints for the advantage function, and uses a dual-competitive architecture to factorize $Q_{tot}$. ResQ~\cite{shen2022resq} uses the idea of residual decomposition to decompose the joint Q-value into the sum of a main function and a residual function, to solve the problem of limited representation.
RQN~\cite{pina2022residual} learns an individualized factor for each agent signifying the relative importance of a specific Q-value trajectory.
LOMAQ~\cite{zohar2022locality} proposes a scalable VD method to improve credit assignment by utilizing the learned local agent rewards. 
CIA~\cite{liu2023contrastive} introduces a new contrastive identity-aware learning method to explicitly encourage differentiation between credit levels of agents.
MARGIN~\cite{ding2023multiagent} decomposes the global mutual information into a weighted sum of local mutual information, making it seamlessly integrated with various VD approaches.

Although these methods can obtain more information from the environment or other agents, they usually require a carefully designed structure.
Orthogonal to developing a new VD method, our proposed HPF amalgamates the strengths of the aforementioned VD approaches, aiming to achieve a highly sample-efficient and expressive VD approach.

\section{Preliminaries}\label{pre}
\subsection{Dec-POMDPs}
The formalism known as \textit{Decentralized Partially Observable Markov Decision Processes} (Dec-POMDPs)~\cite{oliehoek_concise_2016}, characterized by the tuple $G=(S, U, P, r, Z, O, n, \gamma)$, is widely employed in delineating the dynamics of complete cooperative multi-agent endeavors. The system comprises $n$ agents, denoted as $n \in N \equiv{1, \ldots, n}$, and the genuine environmental state $s \in S$ encapsulates both the agents' collective knowledge and additional contextual features. At every timestep $t$, the environment transitions based on the state transition function $P\left(s^{\prime} \mid s, \mathbf{u}\right): S \times \mathbf{U} \times S \rightarrow[0,1]$, where $\mathbf{u} \in \mathbf{U} \equiv U^{n}$ encompasses the collectively generated actions with an action $u_i \in U$ chosen by each agent. A global reward function $r(s, \mathbf{u}): S \times \mathbf{U} \rightarrow \mathbf{R}$ is shared among all agents, and each agent receives a partial observation $z^i \in Z$ based on the observation function $O(z^i|s, u^i) : S \times U \rightarrow Z$. The primary goal for all agents is to jointly optimize the team reward $R_{t}=\sum_{i=0}^{T} \gamma^{i} r_{t+i}$, elucidated through the joint value function $Q^{\boldsymbol{\pi}}\left(s_{t}, \mathbf{u}_{t}\right) = \mathbb{E}_{s_{t+1: \infty}, \mathbf{u}_{t+1: \infty}}\left[R_{t} \mid s_{t}, \mathbf{u}_{t}\right]$.

\subsection{Value Function Decomposition}
\subsubsection{Implementation by network constraint}
The connotation of value function decomposition is to establish a direct training connection between the joint state-action value function and the utility functions of the individual agents during centralized training. Thus, for a joint state-action value function $Q_{tot}$, there exist utility functions $[Q_{i}]_{i=1}^{n}, i \in N$ for each of the agents that satisfy:
\begin{equation}
    \begin{aligned}
        & \underset{\mathbf{u}}{\arg \max } Q_{tot} (\boldsymbol{\tau}, \mathbf{u}) = \\
        & (\arg \max _{u^1} Q_1 \left(\tau^1, u^1\right) \ldots \arg \max _{u^n} Q_n\left(\tau^n, u^n\right)).
    \end{aligned}
\end{equation}


This equation can be described as $[Q_{i}]_{i=1}^{n}$ satisfying the IGM criterion for the hypothetical decomposition-available joint-valued function $Q_{tot}$. Therefore, it is also asserted that $Q_{tot}(\boldsymbol{\tau}, \mathbf{u})$ can be decomposed into $[Q_{i}(\tau_{i}, u_{i})]_{i=1}^{n}$ or $[Q_{i}]_{i=1}^{n}$ is a factor of the decomposition of $Q_{tot}$. Two most typical value function decomposition methods are VDN~\cite{sunehag2017value} and QMIX~\cite{rashid_qmix_2018}, which satisfy the sufficient conditions as $Q_{tot}^{VDN}(\boldsymbol{\tau}, \mathbf{u})=\sum_{i=1}^n Q_i\left(\tau^i, u^i\right)$ and $Q_{tot}^{QMIX}(\boldsymbol{\tau}, \mathbf{u})=\sum_{i=1}^n |w_i| Q_i\left(\tau^i, u^i\right)$.
These two methods calculate the global action value function by introducing the mixing network as $Q_{tot} = g(Q_1, Q_2, \cdots, Q_n)$, or factorize the centralized value function with the assurance of monotonicity property, with parameters restricted to be positive. However, the monotonicity constraint limits the representational capacity of the factored global action-value function, which may lead to the agents failing to find the correct optimal joint actions.

\subsubsection{Implementation by constructing surrogate target}
Another type of VD method requires constructing a surrogate objective to approximate the true joint action-value function, ensuring they are equal when taking the optimal joint action. Here, we denote the true joint action-value function as $Q_{jt}$, and the surrogate objective is estimated by the aforementioned VD methods with network constraints as $Q_{tot}$. WQMIX~\cite{rashid2020weighted} tries to pay more attention to the optimal joint action during the training process, and designs to update the true joint action value function as $Q_{jt} \leftarrow w(s,\mathbf{u}) Q_{jt}\left(s, \boldsymbol{\tau}, \operatorname{argmax}_{\mathbf{u}} Q_{t o t}\left(\boldsymbol{\tau}, \mathbf{u}, s\right)\right)$ with weights $w(s,\mathbf{u})$ determined by whether the current joint actions achieves the optimal one, which is depicted as:

\begin{equation}
    w(s, \mathbf{u})= \begin{cases}1 & \mathbf{u}=\mathbf{u}^*=\operatorname{argmax}_{\mathbf{u}} Q_{jt}(s, \mathbf{u}) \\ & or \ Q_{t o t}(s, \mathbf{u})<Q_{jt}(s, \mathbf{u}), \\ \alpha & \text { otherwise. }\end{cases}
\end{equation}

While QPLEX designs an Advantage-based IGM principle, which loosens the expressiveness of the IGM-class factored global action value function through a duplex dueling architecture as:

\begin{equation}
    \begin{aligned}
        & Q_{t o t}(\boldsymbol{\tau}, \mathbf{u})=V_{t o t}(\boldsymbol{\tau})+A_{t o t}(\boldsymbol{\tau}, \mathbf{u}) \\
        & with \  V_{t o t}(\boldsymbol{\tau})=\max _{\mathbf{u}^{\prime}} Q_{t o t}\left(\boldsymbol{\tau}, \mathbf{u}^{\prime}\right).
    \end{aligned}
\end{equation}

\begin{figure*}[t]
    \centering
    \includegraphics[width=0.95\linewidth]{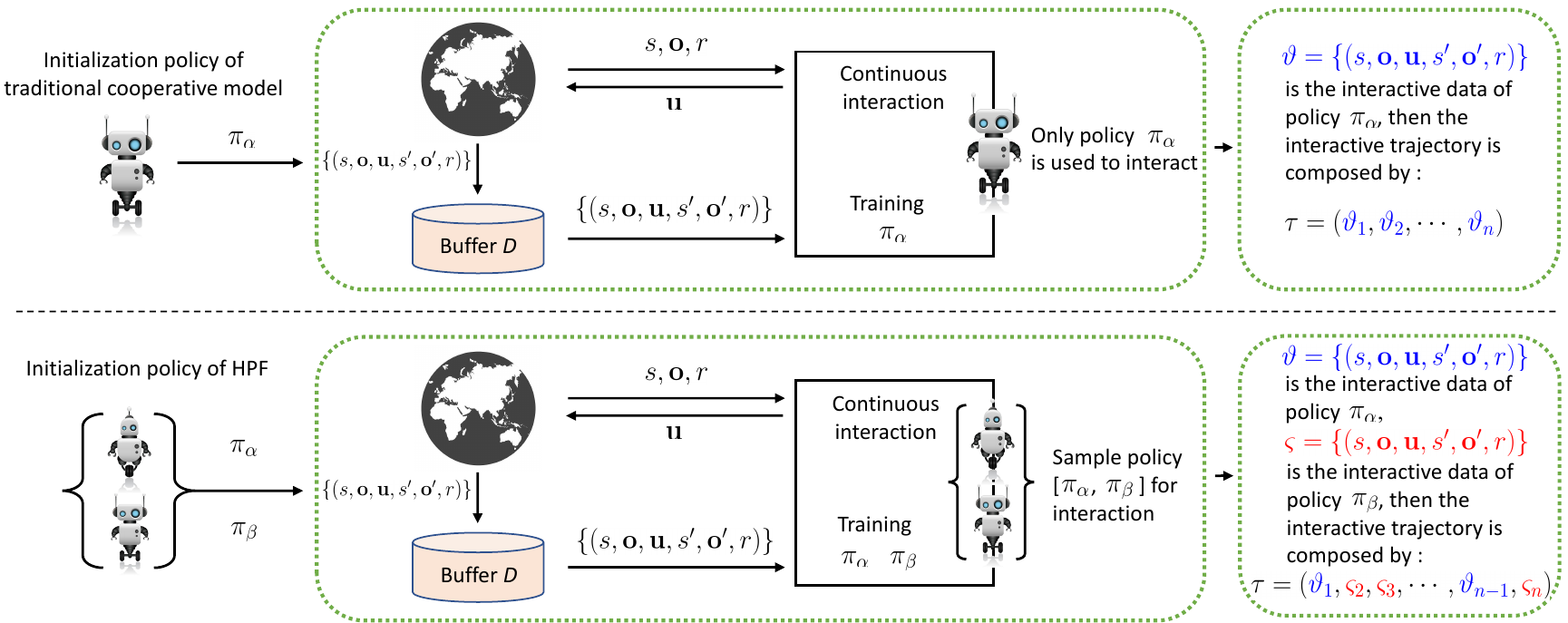}
    \caption{{\bf The illustration of the distinction between HPF and traditional VD methods.} The traditional scheme directly aligns the optimal joint action and optimizes the central value function with the presupposed VD policy itself. The proposed HPF integrates the benefits of different types of VD policies, and expands them into a policy set to sample the experiences for capturing further performance improvement. These VD policies both participate in the interactions with the environment and learning in an adaptive manner.}
    \label{hpf_frame}
\end{figure*}

\section{Methodology}\label{}
In this section, we will introduce the policy fusion pattern HPF in detail. Specifically, we begin by introducing the motivation behind our proposed scheme and then delve into the in-depth analysis of the composite policy in HPF. Additionally, we design an instructive constraint between the utility functions of composite policy, which aims to correct the learned factorized utilities and provide the VD learner with a more accurate estimation potential. Finally, we depict the overall implementation of HPF at the end of this section.

\subsection{Motivation: connections and gaps between different VD policies}
Our proposed fusion framework is founded on the VD paradigm. It decomposes the centralized value function into a combinatorial form of the utility functions for the collaborative model to learn. The aim of this factorization is to align the optimal actions of the centralized value function with the local optimal actions of individual agents. Then the collaborative model can update the utility functions by maximizing the $Q$-value while meeting the IGM criterion.

As summarized in Section \ref{pre}, even though the methods of the two existing types of VD policies are both intended to more effectively align the optimal joint actions of the centralized value function and the local optimal actions of each agent, there are still obvious commonalities and differences among various approaches. These relationships and distinctions can be summed up as the following points.

\subsubsection{For VD by network constraint}
\begin{itemize}
    \item {\it Direct centralized value function decomposition:} it requires fewer neural network components to be optimized, resulting in a faster optimization speed and higher efficiency.
    \item {\it Hard constraints on mixing network parameters:} the assurance of the IGM criterion is realized by limiting the parameters and structure of the mixing network.
    \item {\it Limited representation capacity:} the restrictions of the mixing network prevent the correct fitting of the negative correlation between the centralized value function and the utility functions.
\end{itemize}

\subsubsection{For VD by constructing surrogate target}
\begin{itemize}
    \item {\it Surrogate target reservation:} the constructed surrogate target is incorporated as a new element into the original VD procedure, which may bring about relatively low training efficiency.
    \item {\it Soft constraints on optimization:} both the optimization and alignment process of actions are softly constrained by {\it MSE loss}, which may make it difficult to train the model with multiple objectives to be optimized.
    \item {\it Full representation capacity:} the surrogate target secludes the original true centralized value function objective, which maintains the full representation capacity of the neural network, enabling it to fully fit the complicated reward distribution.
\end{itemize}

Besides the optimization process and algorithmic connections, the two types of VD policies are also complementary to each other in terms of strengths and weaknesses. The VD policies with network constraints are more efficient since they have fewer components to optimize, but they may not be able to find the true optimal joint action with the limited representation. While another type of method enjoys the potential to find the true optimal joint action, but is comparatively much less sample efficient. This inevitably poses a question:

{\it ``Is there a way to integrate the benefits of the two types of VD policies with the interactions between the agents and environments?"}

Because of the connections and complementary strengths above, instead of treating the different types of policies as two isolated aspects, it is more natural to connect both in pursuit of a performant policy in practice.

\begin{figure*}[t]
    \centering
    \includegraphics[width=0.85\linewidth]{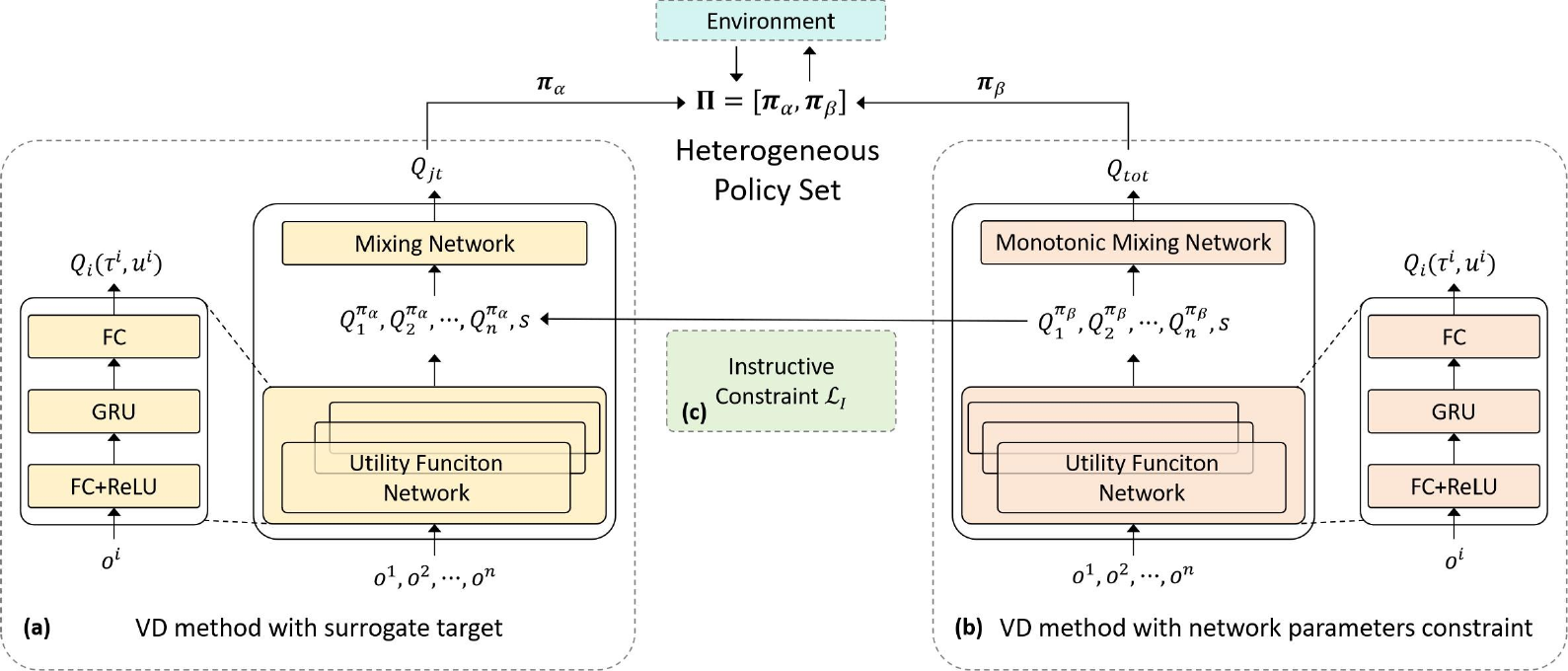}
    \caption{{\bf The architecture of HPF.} (a) The VD method with surrogate target. (b) The VD method with network parameters constraint. (c) The instructive constraint between heterogeneous utility functions. The policies of both VD methods constitute a composite policy set and interact with the environment after sampling.}
    \label{learning-process}
\end{figure*}

\subsection{Policy Extension and Adaptive Composition}\label{hpf}
\noindent {\bf Policy Extension.}
The policy extension in HPF is direct and readily combined with existing VD methods, which is illustrated in Figure \ref{hpf_frame}. Before the collaborative model interacts with the environment, we consider the representation-limited type of VD policy as a candidate policy. It is then extended into the surrogate-target type of policy, forming a candidate set. Subsequently, the collaborative model adaptively synthesizes a composite policy from this set to interact with the environment, which results in the mixed sample trajectory simultaneously containing interaction experiences derived from the two types of VD approaches. Since the mixed sample trajectory potentially includes policy information from two different VD policies, both of them would preserve their respective characteristics while improving the shortcomings of each when they sample these experiences to train the collaborative model. The expanded policy set $\Pi$ is depicted as:
\begin{equation}\label{extended_pi}
    \Pi=\left[\boldsymbol{\pi}_\alpha, \boldsymbol{\pi}_\beta \right],
\end{equation}

\noindent where $\boldsymbol{\pi}_\alpha$ represents the VD policy with surrogate target, and $\boldsymbol{\pi}_\beta$ is VD policy with limited representation. In the context of the extended set of policies formed by existing VD policies, the model continues to meet the IGM criterion, which is shown in Proposition \ref{prop:extended_pi}. The policies in $\Pi=\left[\boldsymbol{\pi}_\alpha, \boldsymbol{\pi}_\beta \right]$ will all get involved in the interactions and learning in a collaborative and adaptive manner as detailed in the next part.

\begin{proposition}\label{prop:extended_pi}
    If $\Pi=\left[\boldsymbol{\pi}_\alpha, \boldsymbol{\pi}_\beta \right]$ is an extended policy set formed by existing various VD policies, then $\Pi$ still satisfies the IGM criterion.
\end{proposition}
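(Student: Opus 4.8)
The plan is to prove the proposition by \emph{inheritance}: since $\boldsymbol{\pi}_\alpha$ and $\boldsymbol{\pi}_\beta$ are each drawn from established VD families, each already satisfies IGM with respect to its own factorization, and the task reduces to showing that the adaptive composition does not destroy the global--local consistency that IGM encodes. First I would fix notation, writing $Q_{tot}^\alpha$ with local utilities $[Q_i^\alpha]_{i=1}^n$ for $\boldsymbol{\pi}_\alpha$ and $Q_{tot}^\beta$ with $[Q_i^\beta]_{i=1}^n$ for $\boldsymbol{\pi}_\beta$, so that by hypothesis
\begin{equation}
    \arg\max_{\mathbf{u}} Q_{tot}^k(\boldsymbol{\tau},\mathbf{u}) = \big(\arg\max_{u^i} Q_i^k(\tau^i,u^i)\big)_{i=1}^n, \quad k\in\{\alpha,\beta\}.
\end{equation}

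Next I would make the composite greedy operator precise. Since HPF's adaptive mechanism selects the degree of participation of each candidate \emph{at the policy level} via value estimation, the greedy joint action induced by $\Pi$ at a trajectory $\boldsymbol{\tau}$ is the greedy joint action of the active candidate $\boldsymbol{\pi}_{k^\ast}$, where $k^\ast$ is chosen by comparing the candidates' estimated optimal values. The key step is then immediate substitution: for that active $k^\ast$ the displayed right-hand side is already the tuple of local argmaxes of $[Q_i^{k^\ast}]_{i=1}^n$, so the global argmax of the composite coincides with the concatenation of its agents' local greedy actions, which is exactly the IGM criterion for $\Pi$.

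The subtle part---and the step I expect to be the main obstacle---is guaranteeing that the composition assigns the \emph{same} candidate to every agent at a given decision point, i.e.\ that it composes at the joint-policy level rather than mixing individual utilities across $\alpha$ and $\beta$ per agent. If two agents were allowed to follow greedy actions from $[Q_i^\alpha]$ and $[Q_j^\beta]$ respectively, the resulting joint action need not maximize any single $Q_{tot}^k$, and IGM could fail. I would therefore argue that the adaptive weighting treats each candidate policy as an indivisible object (consistent with the policy-set formulation $\Pi=[\boldsymbol{\pi}_\alpha,\boldsymbol{\pi}_\beta]$), so that whichever candidate is active, all $n$ local argmaxes are taken with respect to that candidate's utilities. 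Under this consistency the decomposition is preserved verbatim, the remainder being the routine substitution above; one may further remark that since IGM holds for each fixed $k$, it holds for any value-based selection rule over $\{\alpha,\beta\}$, which covers the entire adaptive composite.
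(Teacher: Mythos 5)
Your argument is sound for the object you chose to analyze, but it is a genuinely different route from the paper's, and the difference matters. You argue at the level of the \emph{realized} selection: since $\mathbf{w}$ is one-hot, the active composite at any decision point \emph{is} one of the candidates, each candidate satisfies IGM with respect to its own utilities $[Q_i^k]_{i=1}^n$, and your lockstep observation (all agents must switch candidates together, which the joint-policy-level sampling guarantees) closes the case analysis. The paper instead analyzes the \emph{mixture}: it invokes the ResQ residual decomposition to write the surrogate-target type as $Q_{jt}^{Type1}=Q_{tot}^{(1)}+w_rQ_r$ with $Q_r\le 0$ and $w_r(\boldsymbol{\tau},\bar{\boldsymbol{u}})=0$, writes the network-constrained type as a monotonic $Q_{tot}^{(2)}$, forms the convex combination $Q_{jt}=\lambda Q_{jt}^{Type1}+(1-\lambda)Q_{jt}^{Type2}$ with $\lambda$ the sampling probability, and shows via a monotonicity-plus-nonpositive-residual inequality chain that $Q_{jt}(\boldsymbol{\tau},\bar{\boldsymbol{u}})\ge Q_{jt}(\boldsymbol{\tau},\boldsymbol{u})$ for the common local greedy action $\bar{\boldsymbol{u}}$. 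What each buys: the paper's version makes a statement about the expected value function that actually governs learning from the mixed trajectories, which your per-realization case analysis does not reach; your remark that ``IGM for each fixed $k$ implies IGM for any selection rule'' is true of the realized policy but does not by itself imply IGM of the $\lambda$-weighted mixture when the two candidates carry different local utilities, since a convex combination of two functions maximized at different points need not be maximized at either. Conversely, your route is cleaner and avoids the assumption the paper's proof quietly makes, namely that both candidates share a single set of per-agent utilities $[Q_i]_{i=1}^n$ (so that $\bar{u}_i=\arg\max_{u_i}Q_i(\tau_i,u_i)$ is well defined and common to both types) --- an assumption the instructive KL constraint encourages but does not enforce exactly. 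If you want your proof to subsume the paper's, you would need to add the mixture step, and at that point you would face the same shared-utilities issue.
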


\begin{proof}
    See Section A of Supplementary Materials.
\end{proof}

\noindent {\bf Adaptive Composition.}
Both VD policies in $\Pi$ will form a single composite policy $\boldsymbol{\tilde{\pi}}$ from the heterogeneous VD policies in $\Pi$. More specifically, given the current state $s$ and the joint observations $\boldsymbol{o}$, HPF first samples actions for each member of the policy set $\Pi$ and then calculates the utility functions of each candidate VD policy. Since these utilities will contribute to determining the joint action-value functions, which is the lead of VD policy to update, we use them as a proxy to composite the final policy with probability based on them. For example, we estimate their values at the current state as $\mathbb{Q} = \{ \boldsymbol{Q}^{\boldsymbol{\pi}_\alpha}, \boldsymbol{Q}^{\boldsymbol{\pi}_\beta} \}$. While each agent chooses the action with $\epsilon$-greedy as the interactive action, the set of utility functions is represented as $\boldsymbol{Q}^{\boldsymbol{\pi}_\alpha} = \{Q^{\boldsymbol{\pi}_\alpha}_{1}, Q^{\boldsymbol{\pi}_\alpha}_{2}, \ldots, Q^{\boldsymbol{\pi}_\alpha}_{n} \}$ and $\boldsymbol{Q}^{\boldsymbol{\pi}_\beta} = \{Q^{\boldsymbol{\pi}_\beta}_{1}, Q^{\boldsymbol{\pi}_\beta}_{2}, \ldots, Q^{\boldsymbol{\pi}_\beta}_{n} \}$. Subsequently, HPF then constructs a Boltzmann policy-based categorical distribution for selecting the final interactive policy:

\begin{equation}\label{pi}
    P_{\mathbf{w}}[k]=\frac{\exp \left(\boldsymbol{Q}^{\boldsymbol{\pi}_k} / \eta\right)}{\sum_k \exp \left(\boldsymbol{Q}^{\boldsymbol{\pi}_k} / \eta\right)}, \quad \forall k \in[\alpha, \beta]
\end{equation}

\noindent where $\eta$ is the temperature of the Boltzmann distribution. Then we sample from $\mathbf{w} \sim P_{\mathbf{w}}$ for integrating the composite policy to determine which policy will be implemented during unroll in interaction. The composite policy $\boldsymbol{\tilde{\pi}}$ can be conceptually represented in the following manner:

\begin{equation}\label{pi2}
    \boldsymbol{\tilde{\pi}}(\mathbf{u} \mid \boldsymbol{o})=\left[\delta_{\mathbf{u} \sim \boldsymbol{\pi}_\alpha}, \delta_{\mathbf{u} \sim \boldsymbol{\pi}_\beta}\right] \cdot \mathbf{w}, \quad \mathbf{w} \sim P_{\mathbf{w}}
\end{equation}

\noindent where $\mathbf{w}$ is a one-hot vector, indicating the policy that is selected for the joint observation $\boldsymbol{o}$. $\delta_{\mathbf{u} \sim \boldsymbol{\pi}}$ denotes the Dirac delta distribution centered at $\mathbf{u}$ which is sampled from $\boldsymbol{\pi}$.

The estimation method of $\boldsymbol{Q}^{\boldsymbol{\pi}_k}$ in Equation \eqref{pi} is crucial in the computation of $P_{\mathbf{w}}$ because it influences the sampling process and fusion level of different heterogeneous VD policies within the $\Pi$. Therefore, we design two principles to estimate $\boldsymbol{Q}^{\boldsymbol{\pi}_k}$: (1) it uses the sum of utility functions in $\Pi$ as the sampling criterion inspired by VDN~\cite{sunehag2017value}; (2) it uses the estimation of joint optimal action value functions for the sampling, which takes into account the differences in the representational capabilities~\cite{rashid2020weighted}. The detailed calculation is outlined as follows:

\begin{equation}\label{version}
    \begin{aligned}
        & \text{(Additive)} \ \boldsymbol{Q}^{\boldsymbol{\pi}_k} = \sum_{i=1}^{n}Q_{i}^{\boldsymbol{\pi}_{k}} \\
        & \text{(Optimistic)} \ \boldsymbol{Q}^{\boldsymbol{\pi}_k} = Q_{jt}^{\boldsymbol{\pi}_{k}}(\boldsymbol{\tau}, \hat{\mathbf{u}}^*, s),
    \end{aligned}
\end{equation}

\noindent where $k \in [\alpha, \beta]$, $i \in \{1, \ldots, n\}$ indicate the agent ID, and $\hat{\mathbf{u}}^*=\operatorname{argmax}_{\mathbf{u}} Q_{tot}(\boldsymbol{\tau}, \mathbf{u}, s)$ means the set of maximal local actions. After that, the model interacts with the environment and forms a mixed trajectory to store in the replay buffer. During the training phase, the model samples a batch of sampled trajectories and trains both types of VD policies in $\Pi$ simultaneously. Since the chosen interaction policy at each time step may differ, these trajectories may contain segments of experiences from different types of policies, intertwining the mixed experiences throughout the entire trajectory.

In general, HPF shares the same core as VD methods, aiming to enhance the training efficiency of collaborative policy. Orthogonal to designing a new factorization scheme, HPF focuses on integrating the advantages of existing VD methods using a policy fusion scheme. This also leverages the mixed experiences to guide and promote mutual interactions between the candidate VD policies in $\Pi$. In comparison to existing VD methods, HPF possesses several advantages:
\begin{itemize}
    \item \textbf{Independence of candidate interaction policies:} The policies in the set $\Pi=\left[\boldsymbol{\pi}_\alpha, \boldsymbol{\pi}_\beta \right]$ are mutually independent. Since the composite policy is implemented through sampling, the chosen interaction policy by the model remains unaffected by other policies in $\Pi$.
    \item \textbf{Flexibility in the form of interaction policies:} The policies in the set $\Pi$ vary in type and network architecture. Since HPF aims to integrate the advantages of diverse policies, then $\boldsymbol{\pi}_{\alpha}$ and $\boldsymbol{\pi}_{\beta}$ are assigned two distinct types of policies within the VD framework. In a broader sense, HPF can flexibly adopt any method, allowing policies $\boldsymbol{\pi}_{\alpha}$ and $\boldsymbol{\pi}_{\beta}$ to differ in form, algorithm, or category based on the user’s integration needs.
    \item \textbf{Adaptive policy selection:} Every policy within the set $\Pi$ is available for interaction with the environment, engaging adaptively.
\end{itemize}

\subsection{Instructive Constraint}
Despite policy sampling, HPF's composite policy effectively leverages heterogeneous VD methods. However, VD methods in $\Pi$ enforcing the IGM criterion via parameter constraints still face representational limitations and often underperform due to inaccurate sampling of optimal joint actions during training. To address this, we introduce an instructive constraint to guide agents among the VD policies in the extended set $\Pi$:

\begin{equation}\label{instructive}
    \mathcal{L}_I=\min D_{\mathrm{KL}}\left(\pi_{\alpha}^i\left(\cdot \mid \tau^i\right) \| \pi_{\beta}^i(\cdot \mid \tau^i)\right),
\end{equation}

\noindent where $\pi^i\left(\cdot \mid \tau^i\right)$ is a Boltzmann policy with respect to each
agent’s decentralized utility function is defined as:

\begin{equation}
    \pi_{k}^i\left(u^i \mid \tau^i\right)=\frac{\exp \left(Q_i\left(\tau^i, u^i\right)\right)}{\sum_{u^i} \exp \left(Q_i\left(\tau^i, u^i\right)\right)}, \forall u^i \in U^i, k \in [\alpha, \beta].
\end{equation}

The intuition behind this constraint is that VD methods constructing a surrogate target can learn the correct optimal joint action function, thereby identifying the optimal local behaviors. By making decentralized policies in the representationally constrained VD methods imitate this, agents are encouraged to select local optimal actions aligned with the correct joint action. This enables both VD learners to approximate the optimal joint action while adhering to the IGM principle, promoting efficient collaborative training.

\subsection{Overall Implementation}
As shown in Figure \ref{learning-process}, HPF consists of three parts: the joint action value function $Q_{jt}$ estimated by the VD method with complete representation, the factored global action value function $Q_{tot}$ and the instructive constraint $\mathcal{L}_{I}$ between the utility functions of both methods. All modules are implemented by neural networks, with parameters shared across agents to promote efficient policy learning. The overall learning objective of HPF is defined as follows:

\begin{equation}\label{objective}
    \mathcal{L}=\mathcal{L}_{\mathrm{TD}}^{tot}+\mathcal{L}_{\mathrm{TD}}^{jt}+\mathcal{L}_I.
\end{equation}

In Equation \eqref{objective}, $\mathcal{L}_{\mathrm{TD}}^{tot}$ and $\mathcal{L}_{\mathrm{TD}}^{jt}$ respectively denote the update losses of the distinct heterogeneous factored global action-value function, given by:

\begin{equation}
    \mathcal{L}_{\mathrm{TD}}^{tot}=\sum_{1}^b\left(Q_{tot}\left(s, \mathbf{u}\right)-y_{tot}^{\prime}\right)^2,
\end{equation}

\begin{equation}
    \mathcal{L}_{\mathrm{TD}}^{jt}=\sum_{1}^b\left(Q_{jt}\left(s, \mathbf{u}\right)-y_{jt}^{\prime}\right)^2,
\end{equation}

\noindent where $y_{tot}^{\prime}=r+\gamma Q_{tot}\left(s^{\prime}, \operatorname{argmax}_{\mathbf{u}^{\prime}} Q_{{tot}}\left(\boldsymbol{\tau}^{\prime}, \mathbf{u}^{\prime}\right)\right)$ and $y_{jt}^{\prime}=r+\gamma Q_{jt}\left(s^{\prime}, \operatorname{argmax}_{\mathbf{u}^{\prime}} Q_{{tot}}\left(\boldsymbol{\tau}^{\prime}, \mathbf{u}^{\prime}\right)\right)$ is the respective learning target for both VD policy and $b$ are the batches sampled by $\Pi = [\boldsymbol{\pi}_{\alpha}, \boldsymbol{\pi}_{\beta}]$. $\mathcal{L}_{I}$ denotes the instructive constraint defined in Equation \eqref{instructive}. In addition, we utilize the target networks to stabilize the training. More details of the HPF can be found in Section B of Supplementary Materials.

\section{Experiments}\label{}
In this section, we constructed two sets of HPF composite policies (WQMIX and QMIX, QPLEX and VDN, referred to as HPF-WQ and HPF-QV). Each composite policy utilizes two evaluation methods from Equation \eqref{version} for policy sampling in more difficult tasks (hence denoted as Add-HPF-WQ, Opt-HPF-WQ, Add-HPF-QV, and Opt-HPF-QV). We then compared these methods with state-of-the-art baselines: WQMIX, QPLEX, QMIX, VDN, and ResQ. For a fair comparison, all experimental results are illustrated with the median performance and the 25\%-75\% quartile over 5 random seeds. More details about hyperparameters and experiments are provided in Section C of Supplementary Materials.

\newcolumntype{P}[1]{>{\centering\arraybackslash}p{#1}}
\begin{table}[t]
    \scriptsize
        \setlength{\tabcolsep}{-1pt}
        \begin{minipage}{.49\columnwidth}
          \centering
            \begin{tabular}{|P{1cm}||P{1cm}|P{1cm}|P{1cm}|}
            \hline
            \backslashbox{$U_1$}{$U_2$}& $u_1$ & $u_2$ & $u_3$\\ \hline \hline
            $u_1$ & \textbf{8}& -12 & -12 \\ \hline
            $u_2$ & -12 & 3 & 0 \\ \hline
            $u_3$ & -12 & 0 & 3 \\ \hline
            \end{tabular}
            \subcaption{Payoff matrix}
            \label{table:matrix1}
        \end{minipage}
        \begin{minipage}{.49\columnwidth}
          \centering
            \begin{tabular}{|P{1cm}||P{1cm}|P{1cm}|P{1cm}|}
            \hline
            \backslashbox{$U_1$}{$U_2$} & $u_1$ & $u_3$ & $u_3$\\ \hline \hline
            $u_1$ & -6.91 & -4.71 & -4.62 \\ \hline
            $u_2$ & -4.59 & -2.41 & \textbf{-2.33} \\ \hline
            $u_3$ & -4.72 & -2.52 & -2.46 \\ \hline
            \end{tabular}
            \subcaption{Payoff learned by VDN}
            \label{table:matrix2}
        \end{minipage} \\
        \begin{minipage}{.49\columnwidth}
            \centering
              \begin{tabular}{|P{1cm}||P{1cm}|P{1cm}|P{1cm}|}
              \hline
              \backslashbox{$U_1$}{$U_2$}& $u_1$ & $u_2$ & $u_3$\\ \hline \hline
              $u_1$ & -8.05 & -8.05 & -8.06 \\ \hline
              $u_2$ & -8.06 & \textbf{1.49} & 1.42 \\ \hline
              $u_3$ & -8.06 & 1.44 & 1.37 \\ \hline
              \end{tabular}
              \subcaption{Payoff learned by QMIX}
              \label{table:matrix3}
          \end{minipage}
          \begin{minipage}{.49\columnwidth}
            \centering
              \begin{tabular}{|P{1cm}||P{1cm}|P{1cm}|P{1cm}|}
              \hline
              \backslashbox{$U_1$}{$U_2$} & $u_1$ & $u_3$ & $u_3$\\ \hline \hline
              $u_1$ & \textbf{9.66} & -12.78 & -12.76 \\ \hline
              $u_2$ & -12.23 & 3.16 & -0.75 \\ \hline
              $u_3$ & -11.54 & -0.70 & 3.12 \\ \hline
              \end{tabular}
              \subcaption{Payoff learned by QPLEX}
              \label{table:matrix4}
          \end{minipage} \\
          \begin{minipage}{.49\columnwidth}
            \centering
              \begin{tabular}{|P{1cm}||P{1cm}|P{1cm}|P{1cm}|}
              \hline
              \backslashbox{$U_1$}{$U_2$}& $u_1$ & $u_2$ & $u_3$\\ \hline \hline
              $u_1$ & \textbf{8.03}& -11.98 & -12.02 \\ \hline
              $u_2$ & -11.98 & 1.53 & 1.54 \\ \hline
              $u_3$ & -12.09 & 1.54 & 1.53 \\ \hline
              \end{tabular}
              \subcaption{Payoff learned by WQMIX}
              \label{table:matrix5}
          \end{minipage}
          \begin{minipage}{.49\columnwidth}
            \centering
              \begin{tabular}{|P{1cm}||P{1cm}|P{1cm}|P{1cm}|}
              \hline
              \backslashbox{$U_1$}{$U_2$} & $u_1$ & $u_3$ & $u_3$\\ \hline \hline
              $u_1$ & -0.70 & -0.24 & -0.16 \\ \hline
              $u_2$ & -0.26 & 2.01 & 2.52 \\ \hline
              $u_3$ & -0.19 & 2.60 & \textbf{3.16} \\ \hline
              \end{tabular}
              \subcaption{Payoff learned by ResQ}
              \label{table:matrix6}
          \end{minipage} \\
          \begin{minipage}{.49\columnwidth}
            \centering
              \begin{tabular}{|P{1cm}||P{1cm}|P{1cm}|P{1cm}|}
              \hline
              \backslashbox{$U_1$}{$U_2$}& $u_1$ & $u_2$ & $u_3$\\ \hline \hline
              $u_1$ & \textbf{7.99}& -11.99 & -12.00 \\ \hline
              $u_2$ & -12.02 & 3.00 & -0.01 \\ \hline
              $u_3$ & -12.04 & -0.04 & 2.99 \\ \hline
              \end{tabular}
              \subcaption{Payoff learned by HPF-WQ}
              \label{table:matrix7}
          \end{minipage}
          \begin{minipage}{.49\columnwidth}
            \centering
              \begin{tabular}{|P{1cm}||P{1cm}|P{1cm}|P{1cm}|}
              \hline
              \backslashbox{$U_1$}{$U_2$} & $u_1$ & $u_3$ & $u_3$\\ \hline \hline
              $u_1$ & \textbf{7.93} & -12.56 & -12.35 \\ \hline
              $u_2$ & -12.32 & 3.02 & 0.05 \\ \hline
              $u_3$ & -10.67 & -0.06 & 2.97 \\ \hline
              \end{tabular}
              \subcaption{Payoff learned by HPF-QV}
              \label{table:matrix8}
          \end{minipage} \\
        \caption{Pay-off matrix of one-step game. Boldface means optimal/greedy actions from the state-action value.}
    \label{table:matrix}
\end{table}

\subsection{Matrix Game}
Our evaluation begins with a matrix game shown in Table \ref{table:matrix1}, where two agents need to perform the optimal joint action $(u_1, u_1)$ to obtain the highest reward. However, sub-optimal joint actions $(u_2, u_2)$ and $(u_3, u_3)$ tend to make decentralized agents choose their local actions $u_2$ or $u_3$ when other agents select actions uniformly during training, which will lead to the noted \textit{Relative-Overallgeneralization} problem.

Table \ref{table:matrix} presents the comparison results of various methods in matrix game. We can observe that QMIX fails to select the optimal joint action and struggles in sub-optimal ones, while VDN even can not accurately approximate the suboptimal joint action values. Although QPLEX achieves success, it tends to overestimate the optimal joint action values. Theoretically, both WQMIX and ResQ can satisfy the IGM condition, but they still cannot correctly estimate the centralized value function distribution on this slightly more complex matrix game, and may even lead to the failure of optimal joint action selection, which may require more training. In contrast, our algorithms (HPF-WQ, HPF-QV) effectively identify and select the optimal joint action, which demonstrates superior performance than other baselines.

\begin{figure*}[t]
    \centering
    \includegraphics[width=1\linewidth]{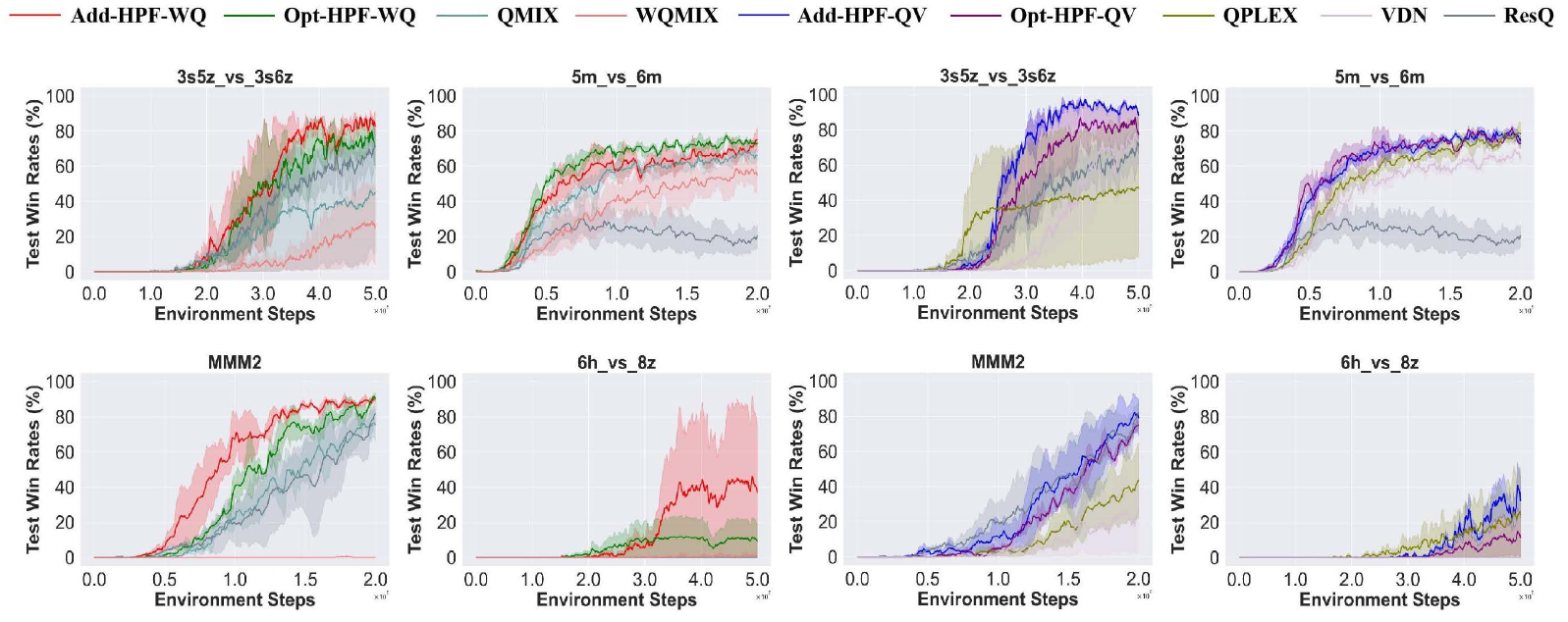}
    \caption{Comparison results on the selected scenarios in the StarCraft Multi-Agent Challenge.}
    \label{smac_results}
\end{figure*}

\begin{figure}[h]
    \centering
    \includegraphics[width=0.49\linewidth]{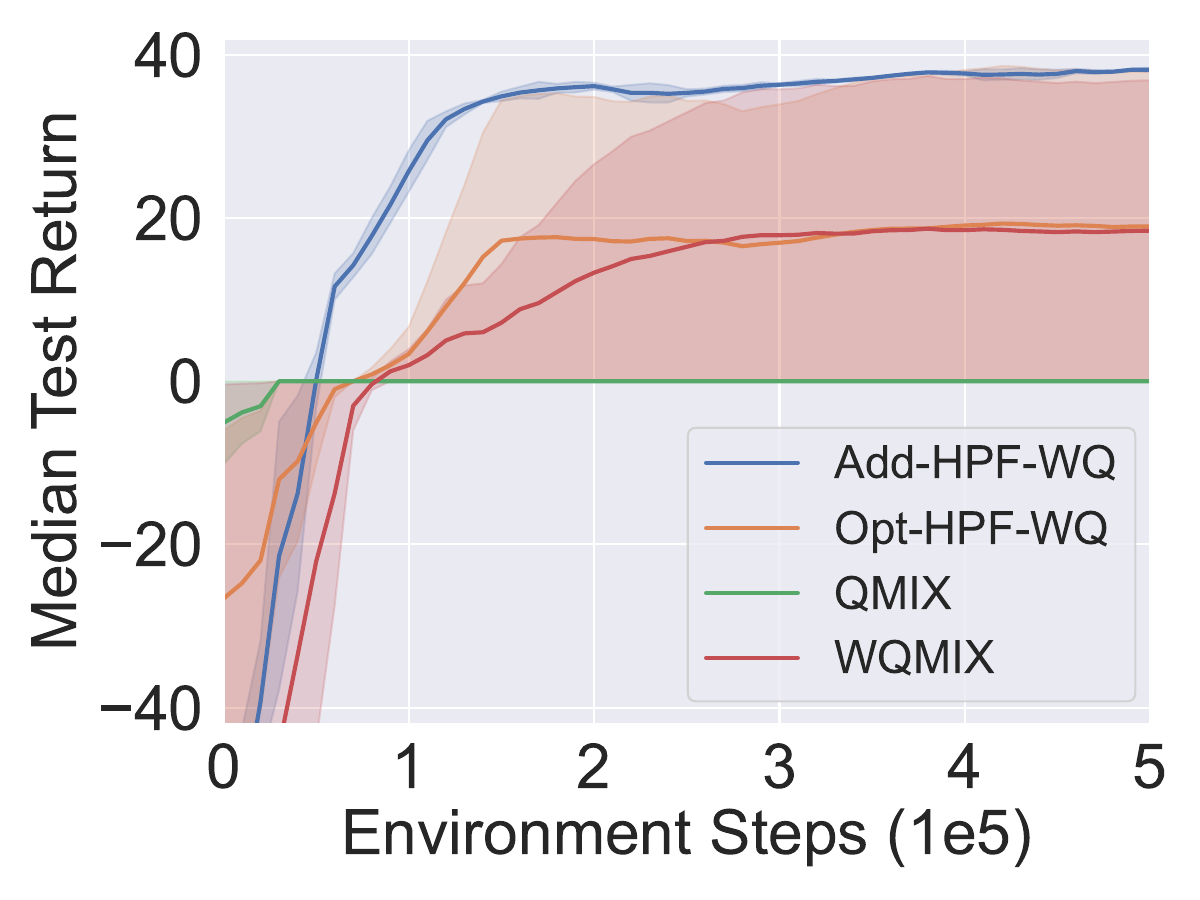}
    \includegraphics[width=0.49\linewidth]{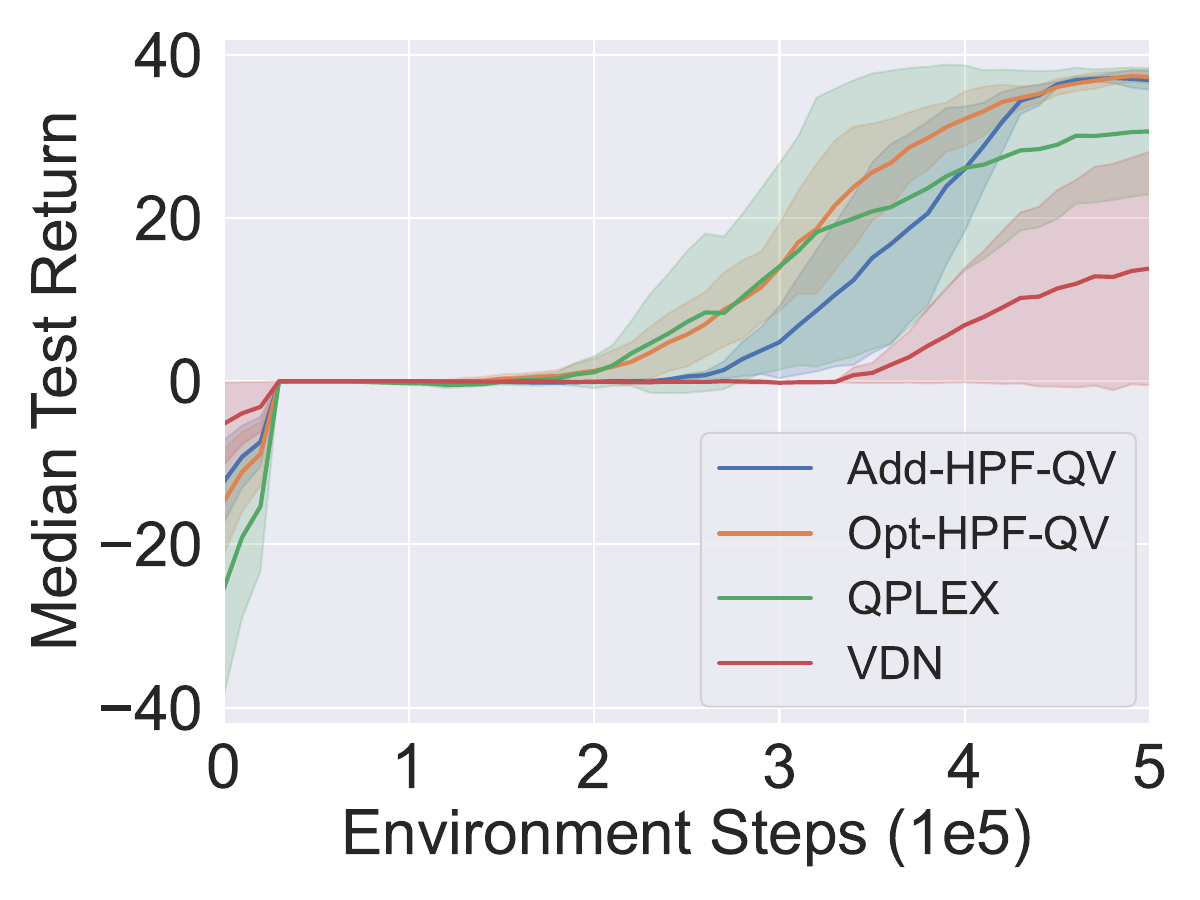}
    \caption{Comparison results in the predator and prey.}
    \label{pp}
\end{figure}

\subsection{StarCraft Multi-Agent Challenge}
To further showcase HPF's scalability in more complex environments, we test it using the StarCraft Multi-Agent Challenge (SMAC) benchmark. The evaluation covers different levels of difficulty and complexity, and focuses on five maps: 3s\_vs\_5z, 5m\_vs\_6m, MMM2, 6h\_vs\_8z and 3s5z\_vs\_3s6z, which serve as our test environments.

Figure \ref{smac_results} shows the performance of all the algorithms on these selected scenarios. One can observe that our algorithm (Add-HPF-WQ, Opt-HPF-WQ, Add-HPF-QV and Opt-HPF-QV) significantly outperforms the counterpart baselines (WQMIX, QMIX, and QPLEX, VDN) and quickly learns a cooperative policy to achieve high testing win rates on all maps. Particularly in the super-hard challenging scenarios (6h\_vs\_8z and 3s5z\_vs\_3s6z), HPF demonstrated impressive capabilities. Though WQMIX can learn the correct action-value function, its poor performance is hard to be noted. The effectiveness of HPF stems from combining the strengths of two types of VD methods, enabling the discovery of beneficial samples during strategy sampling and interaction with the environment, thus enhancing overall performance.

\subsection{Predator and Prey}
We further evaluate the effectiveness of the HPF scheme by conducting experiments in the predator-prey environment, a partially observable task with 8 predators and 8 prey. A negative reward of -2 is given when a single predator attempts to capture prey alone, and a positive reward of +10 is received when two predators successfully cooperate to capture prey.

As shown in Figure \ref{pp}, QMIX and VDN fail to learn effective policies to capture the prey. Although WQMIX benefits from the complete IGM expressiveness, its poor performance and high variance suggest difficulties in approximating the true joint action-value function. QPLEX has some competitiveness, but the effect is still inferior to our proposed scheme, which further demonstrates the effectiveness of HPF.

\subsection{Ablation Studies}
To test the effectiveness of the sampling method for VD policies in HPF, we replaced the sampling approach in Equations \eqref{pi}-\eqref{pi2} with random sampling to validate its effectiveness. The experimental results, as presented in Section D of Supplementary Materials, demonstrate that the composite policy in HPF with sampling based on the estimated value function achieves superior performance. The sampling method in HPF enables the model to identify more suitable sampling policies to interact, thus improving the overall performance of the model.

\section{Conclusion}
This paper introduces the Heterogeneous Policy Fusion scheme for cooperative multi-agent reinforcement learning, which balances representation ability and training efficiency in value decomposition algorithms by combining policies with both restricted and unrestricted representation capacities. HPF designs a composite policy by extending two distinct types of value decomposition policies into a policy set, and integrating the interaction policy based on their value function estimates. This extension enables HPF to attain both high training efficiency and complete representation capacity. Experiments show that HPF not only learns optimal joint actions effectively but also outperforms most baseline VD methods without needing specialized design, proving its effectiveness and ease of use.

\bibliographystyle{named}
\bibliography{ijcai25}

\begin{thebibliography}{}

\bibitem[\protect\citeauthoryear{Ding \bgroup \em et al.\egroup }{2023a}]{ding2023multi}
Shifei Ding, Wei Du, Ling Ding, Lili Guo, Jian Zhang, and Bo~An.
\newblock Multi-agent dueling q-learning with mean field and value decomposition.
\newblock {\em Pattern Recognition}, 139:109436, 2023.

\bibitem[\protect\citeauthoryear{Ding \bgroup \em et al.\egroup }{2023b}]{ding2023multiagent}
Shifei Ding, Wei Du, Ling Ding, Jian Zhang, Lili Guo, and Bo~An.
\newblock Multiagent reinforcement learning with graphical mutual information maximization.
\newblock {\em IEEE Transactions on Neural Networks and Learning Systems}, 2023.

\bibitem[\protect\citeauthoryear{Jaques \bgroup \em et al.\egroup }{2019}]{jaques2019social}
Natasha Jaques, Angeliki Lazaridou, Edward Hughes, Caglar Gulcehre, Pedro Ortega, DJ~Strouse, Joel~Z Leibo, and Nando De~Freitas.
\newblock Social influence as intrinsic motivation for multi-agent deep reinforcement learning.
\newblock In {\em International conference on machine learning}, pages 3040--3049. PMLR, 2019.

\bibitem[\protect\citeauthoryear{Khan \bgroup \em et al.\egroup }{2022}]{khan2022transformer}
Muhammad~Junaid Khan, Syed~Hammad Ahmed, and Gita Sukthankar.
\newblock Transformer-based value function decomposition for cooperative multi-agent reinforcement learning in starcraft.
\newblock In {\em Proceedings of the AAAI Conference on Artificial Intelligence and Interactive Digital Entertainment}, volume~18, pages 113--119, 2022.

\bibitem[\protect\citeauthoryear{Liu \bgroup \em et al.\egroup }{2023}]{liu2023contrastive}
Shunyu Liu, Yihe Zhou, Jie Song, Tongya Zheng, Kaixuan Chen, Tongtian Zhu, Zunlei Feng, and Mingli Song.
\newblock Contrastive identity-aware learning for multi-agent value decomposition.
\newblock In {\em Proceedings of the AAAI Conference on Artificial Intelligence}, volume~37, pages 11595--11603, 2023.

\bibitem[\protect\citeauthoryear{Mei \bgroup \em et al.\egroup }{2023}]{mei2023mac}
Yongsheng Mei, Hanhan Zhou, Tian Lan, Guru Venkataramani, and Peng Wei.
\newblock Mac-po: Multi-agent experience replay via collective priority optimization.
\newblock In {\em Proceedings of the 2023 International Conference on Autonomous Agents and Multiagent Systems}, pages 466--475, 2023.

\bibitem[\protect\citeauthoryear{Mnih \bgroup \em et al.\egroup }{2015}]{mnih_human-level_2015}
Volodymyr Mnih, Koray Kavukcuoglu, David Silver, Andrei~A. Rusu, Joel Veness, Marc~G. Bellemare, Alex Graves, Martin Riedmiller, Andreas~K. Fidjeland, and Georg Ostrovski.
\newblock Human-level control through deep reinforcement learning.
\newblock {\em nature}, 518(7540):529--533, 2015.

\bibitem[\protect\citeauthoryear{Naderializadeh \bgroup \em et al.\egroup }{2020}]{naderializadeh2020graph}
Navid Naderializadeh, Fan~H Hung, Sean Soleyman, and Deepak Khosla.
\newblock Graph convolutional value decomposition in multi-agent reinforcement learning.
\newblock {\em arXiv preprint arXiv:2010.04740}, 2020.

\bibitem[\protect\citeauthoryear{Oliehoek and Amato}{2016}]{oliehoek_concise_2016}
Frans~A. Oliehoek and Christopher Amato.
\newblock {\em A {Concise} {Introduction} to {Decentralized} {POMDPs}}.
\newblock Springer {Briefs} in {Intelligent} {Systems}. Springer, 2016.

\bibitem[\protect\citeauthoryear{Perrusqu{\'\i}a \bgroup \em et al.\egroup }{2021}]{perrusquia2021multi}
Adolfo Perrusqu{\'\i}a, Wen Yu, and Xiaoou Li.
\newblock Multi-agent reinforcement learning for redundant robot control in task-space.
\newblock {\em International Journal of Machine Learning and Cybernetics}, 12:231--241, 2021.

\bibitem[\protect\citeauthoryear{Pina \bgroup \em et al.\egroup }{2022}]{pina2022residual}
Rafael Pina, Varuna De~Silva, Joosep Hook, and Ahmet Kondoz.
\newblock Residual q-networks for value function factorizing in multiagent reinforcement learning.
\newblock {\em IEEE Transactions on Neural Networks and Learning Systems}, 2022.

\bibitem[\protect\citeauthoryear{Rashid \bgroup \em et al.\egroup }{2018}]{rashid_qmix_2018}
Tabish Rashid, Mikayel Samvelyan, Christian Schröder~de Witt, Gregory Farquhar, Jakob~N. Foerster, and Shimon Whiteson.
\newblock {QMIX}: {Monotonic} {Value} {Function} {Factorisation} for {Deep} {Multi}-{Agent} {Reinforcement} {Learning}.
\newblock In {\em Proceedings of the 35th {International} {Conference} on {Machine} {Learning}}, pages 4292--4301. PMLR, 2018.

\bibitem[\protect\citeauthoryear{Rashid \bgroup \em et al.\egroup }{2020}]{rashid2020weighted}
Tabish Rashid, Gregory Farquhar, Bei Peng, and Shimon Whiteson.
\newblock Weighted qmix: Expanding monotonic value function factorisation for deep multi-agent reinforcement learning.
\newblock {\em Advances in neural information processing systems}, 33:10199--10210, 2020.

\bibitem[\protect\citeauthoryear{Samvelyan \bgroup \em et al.\egroup }{2019}]{samvelyan_starcraft_2019}
Mikayel Samvelyan, Tabish Rashid, Christian Schröder~de Witt, Gregory Farquhar, Nantas Nardelli, Tim G.~J. Rudner, Chia-Man Hung, Philip H.~S. Torr, Jakob~N. Foerster, and Shimon Whiteson.
\newblock The {StarCraft} {Multi}-{Agent} {Challenge}.
\newblock In {\em Proceedings of the 18th {International} {Conference} on {Autonomous} {Agents} and {MultiAgent} {Systems}, {AAMAS}}, pages 2186--2188, 2019.

\bibitem[\protect\citeauthoryear{Shen \bgroup \em et al.\egroup }{2022}]{shen2022resq}
Siqi Shen, Mengwei Qiu, Jun Liu, Weiquan Liu, Yongquan Fu, Xinwang Liu, and Cheng Wang.
\newblock Resq: A residual q function-based approach for multi-agent reinforcement learning value factorization.
\newblock {\em Advances in Neural Information Processing Systems}, 35:5471--5483, 2022.

\bibitem[\protect\citeauthoryear{Son \bgroup \em et al.\egroup }{2019}]{son2019qtran}
Kyunghwan Son, Daewoo Kim, Wan~Ju Kang, David~Earl Hostallero, and Yung Yi.
\newblock Qtran: Learning to factorize with transformation for cooperative multi-agent reinforcement learning.
\newblock In {\em International conference on machine learning}, pages 5887--5896. PMLR, 2019.

\bibitem[\protect\citeauthoryear{Son \bgroup \em et al.\egroup }{2020}]{son2020qtran++}
Kyunghwan Son, Sungsoo Ahn, Roben~Delos Reyes, Jinwoo Shin, and Yung Yi.
\newblock Qtran++: improved value transformation for cooperative multi-agent reinforcement learning.
\newblock {\em arXiv preprint arXiv:2006.12010}, 2020.

\bibitem[\protect\citeauthoryear{Sunehag \bgroup \em et al.\egroup }{2017}]{sunehag2017value}
Peter Sunehag, Guy Lever, Audrunas Gruslys, Wojciech~Marian Czarnecki, Vinicius Zambaldi, Max Jaderberg, Marc Lanctot, Nicolas Sonnerat, Joel~Z Leibo, Karl Tuyls, et~al.
\newblock Value-decomposition networks for cooperative multi-agent learning.
\newblock {\em arXiv preprint arXiv:1706.05296}, 2017.

\bibitem[\protect\citeauthoryear{Sutton and Barto}{2018}]{sutton_reinforcement_2018}
Richard~S. Sutton and Andrew~G. Barto.
\newblock {\em Reinforcement learning: {An} introduction}.
\newblock MIT press, 2018.

\bibitem[\protect\citeauthoryear{Tan}{1993}]{tan1993multi}
Ming Tan.
\newblock Multi-agent reinforcement learning: Independent vs. cooperative agents.
\newblock In {\em Proceedings of the 10th international conference on machine learning}, pages 330--337. PMLR, 1993.

\bibitem[\protect\citeauthoryear{Wang \bgroup \em et al.\egroup }{2021}]{wang2021qplex}
Jianhao Wang, Zhizhou Ren, Terry Liu, Yang Yu, and Chongjie Zhang.
\newblock {\{}QPLEX{\}}: Duplex dueling multi-agent q-learning.
\newblock In {\em International Conference on Learning Representations}, 2021.

\bibitem[\protect\citeauthoryear{Wei \bgroup \em et al.\egroup }{2022}]{wei2022vgn}
Qinglai Wei, Yugu Li, Jie Zhang, and Fei-Yue Wang.
\newblock Vgn: Value decomposition with graph attention networks for multiagent reinforcement learning.
\newblock {\em IEEE Transactions on Neural Networks and Learning Systems}, 2022.

\bibitem[\protect\citeauthoryear{Xu \bgroup \em et al.\egroup }{2023}]{xu2023haven}
Zhiwei Xu, Yunpeng Bai, Bin Zhang, Dapeng Li, and Guoliang Fan.
\newblock Haven: Hierarchical cooperative multi-agent reinforcement learning with dual coordination mechanism.
\newblock In {\em Proceedings of the AAAI Conference on Artificial Intelligence}, volume~37, pages 11735--11743, 2023.

\bibitem[\protect\citeauthoryear{Yang \bgroup \em et al.\egroup }{2020}]{yang2020qatten}
Yaodong Yang, Jianye Hao, Ben Liao, Kun Shao, Guangyong Chen, Wulong Liu, and Hongyao Tang.
\newblock Qatten: A general framework for cooperative multiagent reinforcement learning.
\newblock {\em arXiv preprint arXiv:2002.03939}, 2020.

\bibitem[\protect\citeauthoryear{Yang \bgroup \em et al.\egroup }{2023}]{yang2023causal}
Shantian Yang, Bo~Yang, Zheng Zeng, and Zhongfeng Kang.
\newblock Causal inference multi-agent reinforcement learning for traffic signal control.
\newblock {\em Information Fusion}, 94:243--256, 2023.

\bibitem[\protect\citeauthoryear{Zhang \bgroup \em et al.\egroup }{2023}]{9712866}
Ruilong Zhang, Qun Zong, Xiuyun Zhang, Liqian Dou, and Bailing Tian.
\newblock Game of drones: Multi-uav pursuit-evasion game with online motion planning by deep reinforcement learning.
\newblock {\em IEEE Transactions on Neural Networks and Learning Systems}, 34(10):7900--7909, 2023.

\bibitem[\protect\citeauthoryear{Zohar \bgroup \em et al.\egroup }{2022}]{zohar2022locality}
Roy Zohar, Shie Mannor, and Guy Tennenholtz.
\newblock Locality matters: A scalable value decomposition approach for cooperative multi-agent reinforcement learning.
\newblock In {\em Proceedings of the AAAI Conference on Artificial Intelligence}, volume~36, pages 9278--9285, 2022.

\end{thebibliography}


\newpage
\section{Appendix: Brief Introduction}
This supplementary material provides the proof procedure, additional environment settings, and ablation results that were omitted from the main text. In Section A, we provide the proof of Proposition 1. Section B describes the specific implementation process of HPF, which takes WQMIX and QMIX as examples of two candidate VD policies of different types. We describe the experimental scenarios and hyperparameters setup in Section C in detail. The additional analysis and omitted ablation results are presented in Section D.

\appendix
\section{Proof of Proposition 1}\label{app:A}
In this section, we show the proof of Proposition 1.

\begin{proposition}
    If $\Pi=\left[\boldsymbol{\pi}_\alpha, \boldsymbol{\pi}_\beta \right]$ is an extended policy set formed by existing heterogeneous VD policies, then $\Pi$ still satisfies the IGM criterion.
\end{proposition}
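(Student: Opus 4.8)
The plan is to reduce the claim about the extended set $\Pi$ to the IGM property that each of its constituent policies already possesses, and then argue that the one-hot composition rule preserves that property. First I would recall the IGM criterion from Equation~(1): a joint value function $Q_{tot}$ together with utilities $[Q_i]_{i=1}^n$ satisfies IGM when $\arg\max_{\mathbf{u}} Q_{tot}(\boldsymbol{\tau},\mathbf{u})$ coincides with the tuple of individual greedy actions $(\arg\max_{u^1} Q_1(\tau^1,u^1),\ldots,\arg\max_{u^n} Q_n(\tau^n,u^n))$. By construction, each candidate policy in $\Pi=[\boldsymbol{\pi}_\alpha,\boldsymbol{\pi}_\beta]$ already satisfies this: $\boldsymbol{\pi}_\beta$ (the network-constraint type, e.g.\ VDN or QMIX) meets IGM through the additivity or monotonicity of its mixing network, while $\boldsymbol{\pi}_\alpha$ (the surrogate-target type, e.g.\ WQMIX or QPLEX) meets IGM through its surrogate construction or advantage-based formulation. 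These are established properties of the underlying methods, which I would invoke rather than re-derive.

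The key step is to show that the composite policy $\boldsymbol{\tilde{\pi}}$ defined in Equation~\eqref{pi2} inherits IGM. Here I would lean on the fact that the mixing weight $\mathbf{w}\sim P_{\mathbf{w}}$ from Equation~\eqref{pi} is a \emph{one-hot} vector, so that at every decision point $\boldsymbol{\tilde{\pi}}(\mathbf{u}\mid\boldsymbol{o})$ equals exactly one of $\delta_{\mathbf{u}\sim\boldsymbol{\pi}_\alpha}$ or $\delta_{\mathbf{u}\sim\boldsymbol{\pi}_\beta}$, never a genuine blend of the two. Fixing the sampled index $k\in\{\alpha,\beta\}$, the active policy is exactly $\boldsymbol{\pi}_k$, whose value function $Q_{tot}^{\boldsymbol{\pi}_k}$ and utilities $[Q_i^{\boldsymbol{\pi}_k}]_{i=1}^n$ satisfy IGM by the previous paragraph. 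Consequently the joint greedy action selected through $\boldsymbol{\tilde{\pi}}$ decomposes into per-agent greedy actions, and the IGM consistency carries over verbatim for that realization of $\mathbf{w}$. Since this holds for both admissible values of $k$, $\Pi$ satisfies IGM regardless of which policy the Boltzmann sampler ultimately selects.

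The step I expect to require the most care is articulating precisely why the composition does \emph{not} break the argmax decomposition. The danger would arise if $\boldsymbol{\tilde{\pi}}$ were built from a \emph{convex} combination of the two joint value functions, since the $\arg\max$ of a weighted sum need not equal the componentwise argmax of either summand, and IGM could then fail. The resolution is exactly that $\mathbf{w}$ is one-hot rather than a soft weight: the fusion acts at the level of \emph{policy selection}, not at the level of mixing value functions, so no new composite $Q_{tot}$ is ever formed whose argmax would have to be re-examined. I would make this distinction explicit, briefly contrasting it with the soft-mixing scenario, so that the preservation of IGM reads as a direct consequence of the one-hot sampling scheme in Equations~\eqref{pi}--\eqref{pi2}.
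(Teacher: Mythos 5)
Your proof is sound under its own reading of the claim, but it takes a genuinely different route from the paper's. You argue casewise on the realization of the one-hot sample $\mathbf{w}$: since the composite policy is always exactly one of $\boldsymbol{\pi}_\alpha$ or $\boldsymbol{\pi}_\beta$, and each constituent satisfies IGM by construction, the argmax decomposition holds for whichever branch is active. The paper instead confronts head-on the very case you flag as the ``danger'' and dismiss: it writes the centralized value function of $\Pi$ as the convex combination $Q_{jt}=\lambda Q_{jt}^{Type1}+(1-\lambda)Q_{jt}^{Type2}$ weighted by the sampling probability $\lambda$, invokes the residual decomposition from ResQ to express the surrogate-target type as $Q_{tot}^{(1)}+w_rQ_r$ with $Q_r\le 0$ and $w_r$ vanishing at the greedy joint action $\bar{\boldsymbol{u}}$, and then shows $Q_{jt}(\boldsymbol{\tau},\bar{\boldsymbol{u}})\ge Q_{jt}(\boldsymbol{\tau},\boldsymbol{u})$ directly. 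The reason the weighted sum does not break the argmax --- contrary to your general worry --- is that both $Q_{tot}^{(1)}$ and $Q_{tot}^{(2)}$ are monotone functions of the \emph{same} per-agent utilities $[Q_i]_{i=1}^n$, so both attain their maximum at the same $\bar{\boldsymbol{u}}$, and the residual term only penalizes non-greedy actions. Your approach is more elementary and requires no structural assumptions beyond each policy's individual IGM property, but it only establishes IGM per realization of the sampler; the paper's argument buys the stronger conclusion that the \emph{mixture} value function induced by the adaptive composition (which is what the trajectories in the shared replay buffer effectively reflect) still satisfies IGM. If you keep your casewise route, you should at least note that the convex-combination case is not actually pathological here, since the shared-utility monotonicity makes both summands co-maximized.
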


\begin{proof}
    The Theorem 1 and Section 4 of ResQ ~\cite{shen2022resq} show that {\it for any hard-to-factorize state-action value function $Q_{jt}$, It can generally always be regarded as satisfying the following formula:}
    \begin{equation}\label{eq:Q_jt}
        Q_{j t}(\boldsymbol{\tau}, \boldsymbol{u})=w_{t o t}(\boldsymbol{\tau}, \boldsymbol{u}) Q_{t o t}(\boldsymbol{\tau}, \boldsymbol{u}) + w_r(\boldsymbol{\tau}, \boldsymbol{u}) Q_r(\boldsymbol{\tau}, \boldsymbol{u})
    \end{equation}

    \noindent where masks $w_{t o t}(\boldsymbol{\tau}, \boldsymbol{u})$, $w_{r}(\boldsymbol{\tau}, \boldsymbol{u}) \in \{0, 1\}$. The main $Q_{tot}$ shares the same greedy optimal policy as $Q_{jt}$, and $Q_{t o t}(\boldsymbol{\tau}, \boldsymbol{u})$ is factorized into per-agent utilities $Q_i$, and each agent selects its own optimal action greedily with respect to $Q_{i}$.

    More generally, the value decomposition policies that construct surrogate targets can all be regarded as special cases of Formula \eqref{eq:Q_jt} and can be represented in a generalized form as:

    \begin{equation}
        Q_{jt}^{Type1}(\boldsymbol{\tau}, \boldsymbol{u}) = Q_{tot}^{(1)}(\boldsymbol{\tau}, \boldsymbol{u})+w_r(\boldsymbol{\tau}, \boldsymbol{u}) Q_r(\boldsymbol{\tau}, \boldsymbol{u})
    \end{equation}

    \noindent where $Q_r(\boldsymbol{\tau}, \boldsymbol{u}) \leq 0$, $Q_{tot}^{(1)}(\boldsymbol{\tau}, \boldsymbol{u})$ and $\left[Q_i\left(\tau_i, u_i\right)\right]_{i=1}^n$ satisfy the monotonicity condition $\partial Q_{tot}^{(1)}(\boldsymbol{\tau}, \boldsymbol{u}) / \partial Q_i\left(\tau_i, u_i\right) \geq 0, \quad \forall i \in N$ and

    \begin{equation}\label{w_r}
        w_r(\boldsymbol{\tau}, \boldsymbol{u})= \begin{cases}0 & \boldsymbol{u}=\overline{\boldsymbol{u}} \\ 1 & \boldsymbol{u} \neq \overline{\boldsymbol{u}}\end{cases}
    \end{equation}

    While the value decomposition policies by network constraints still satisfy the monotonicity and are depicted as:

    \begin{equation}
        Q_{jt}^{Type2}(\boldsymbol{\tau}, \boldsymbol{u}) = Q_{tot}^{(2)}(\boldsymbol{\tau}, \boldsymbol{u}) = \sum_{i=1}^{n}|w_i|Q_i\left(\tau_i, u_i\right)
    \end{equation}

    For the extended composite policy set $\Pi=\left[\boldsymbol{\pi}_\alpha, \boldsymbol{\pi}_\beta \right]$ in the HPF framework, the VD policies it contains are sampled for interaction and update, so its centralized value function $Q_{jt}$ is expressed as:

    \begin{equation}
        \begin{aligned}
            Q_{jt}(\boldsymbol{\tau}, \boldsymbol{u}) & = \lambda Q_{jt}^{Type1}(\boldsymbol{\tau}, \boldsymbol{u}) + (1-\lambda) Q_{jt}^{Type2}(\boldsymbol{\tau}, \boldsymbol{u}) \\
            & = \lambda (Q_{tot}^{(1)}(\boldsymbol{\tau}, \boldsymbol{u})+w_r(\boldsymbol{\tau}, \boldsymbol{u}) Q_r(\boldsymbol{\tau}, \boldsymbol{u})) \\
            & + (1-\lambda) Q_{tot}^{(2)}(\boldsymbol{\tau}, \boldsymbol{u})
        \end{aligned}
    \end{equation}

    \noindent where $Q_r(\boldsymbol{\tau}, \boldsymbol{u}) \leq 0$ and $\lambda \in [0, 1]$ is the sample probability of the candidate VD policy in $\Pi$. We show that $\arg \max _{\boldsymbol{u}} Q_{jt}(\boldsymbol{\tau}, \boldsymbol{u}) = \boldsymbol{\bar{\boldsymbol{u}}}$, where $\bar{u}_i=\arg \max _{u_i} Q_i\left(\tau_i, u_i\right)$ and $\boldsymbol{\bar{\boldsymbol{u}}}=\left[\bar{u}_i\right]_{i=1}^n$.

    Then the inequality holds:
    
    \begin{subequations}\label{igm_hpf}
        \begin{align}
            Q_{jt}(\boldsymbol{\tau}, \bar{\boldsymbol{u}}) & = \lambda Q_{tot}^{(1)}(\boldsymbol{\tau}, \bar{\boldsymbol{u}}) + (1-\lambda) Q_{tot}^{(2)}(\boldsymbol{\tau}, \bar{\boldsymbol{u}}) \\
            & \geq \lambda Q_{tot}^{(1)}(\boldsymbol{\tau}, \boldsymbol{u}) + (1-\lambda) Q_{tot}^{(2)}(\boldsymbol{\tau}, \boldsymbol{u}) \\
            & \geq \lambda (Q_{tot}^{(1)}(\boldsymbol{\tau}, \boldsymbol{u})+w_r(\boldsymbol{\tau}, \boldsymbol{u}) Q_r(\boldsymbol{\tau}, \boldsymbol{u})) \nonumber \label{w_r_i}\\
            & + (1-\lambda) Q_{tot}^{(2)}(\boldsymbol{\tau}, \boldsymbol{u}) \\
            & = Q_{jt}(\boldsymbol{\tau}, \boldsymbol{u})
        \end{align}
    \end{subequations}

    \noindent \eqref{w_r_i} comes from $w_r(\boldsymbol{\tau}, \boldsymbol{u})=1, \forall \boldsymbol{u} \neq \bar{\boldsymbol{u}}$ and $Q_r(\boldsymbol{\tau}, \boldsymbol{u}) \leq 0$. This inequality \eqref{igm_hpf} means that $\boldsymbol{\bar{\boldsymbol{u}}}=\left[\bar{u}_i\right]_{i=1}^n$ maximizes $Q_{jt}(\boldsymbol{\tau}, \boldsymbol{u})$ with both $Q_{tot}^{(2)}(\boldsymbol{\tau}, \boldsymbol{u})$ and $Q_{tot}^{(2)}(\boldsymbol{\tau}, \boldsymbol{u})$ satisfy the monotonicity condition. Thus $\left[Q_i\left(\tau_i, u_i\right)\right]_{i=1}^n$ satisfies the IGM for $Q_{jt}(\boldsymbol{\tau}, \boldsymbol{u})$.
\end{proof}

\section{Detailed Implementation of HPF}\label{app:B}
In this section, we show the explicit implementation of HPF, which is also described in Algorithm \ref{algo}. Here we take Add-HPF-WQ as an example to illustrate the HPF algorithm process, with other HPF variants following a similar training procedure. 

In Add-HPF-WQ, WQMIX\cite{rashid2020weighted} is chosen as $\boldsymbol{\pi}_{\alpha}$ in Eq.(4) of the main text. While the $\boldsymbol{\pi}_{\beta}$ with limited representation is QMIX\cite{rashid_qmix_2018}. Since QMIX assumes the task can be directly decomposed, which also means $Q_{jt}=Q_{tot}$ is this setting. 

First, we initialize the network parameters of the two types of value decomposition policy $\theta_{\alpha}$, $\theta_{\beta}$ as well as their target network parameters $\theta_{\alpha}^{-}$, $\theta_{\beta}^{-}$, experience replay buffer $\mathcal{D}$, learning rate $\zeta$, the size of batch sampling \textit{batch\_size} and target network \textit{update\_interval} $m$. We evaluate the utility functions of both candidate VD policies, $[Q_{i}(\tau^{i},u^{i};\theta_{\alpha})]_{i=1}^{n}$ and $[Q_{i}(\tau^{i},u^{i};\theta_{\beta})]_{i=1}^{n}$, to construct the sampling policy for composite $\Pi$ via Eq.(5)-Eq.(7) in the main text. Then the interactive trajectory will be stored in the replay buffer.

During the training, HPF extracts a small batch of interaction trajectories from $\mathcal{D}$ and estimates the utility function $[Q_{i}^{\boldsymbol{\pi}_{\alpha}}]_{i=0}^{n}$, $[Q_{i}^{\boldsymbol{\pi}_{\beta}}]_{i=0}^{n}$ and the centralized value function $Q_{jt}(\theta_{\alpha})$, $Q_{jt}(\theta_{\beta})$ for each time step. The $Mlp(\cdot)$ function of Algorithm \ref{algo} represents a regular feedforward neural network that does not manipulate the weight parameters generated by the hypernetwork in any way. While $Mix(\cdot)$ function represents the mixing network in QMIX, which involves taking the absolute values of the weight generated by the hypernetwork, ensuring all weights remain positive to meet IGM principles. Then we can update $Q_{jt}(\theta_{\alpha})$, $Q_{jt}(\theta_{\beta})$ and all the utility functions like DQN\cite{mnih_human-level_2015} updates through the back-propagation. In the following part, we omit the parameters in the functions, such as $Q_{jt}^{\boldsymbol{\pi}_{\alpha}}=Q_{jt}^{\boldsymbol{\pi}_{\alpha}}(\theta_{\alpha})$, $Q_{jt}^{\boldsymbol{\pi}_{\alpha}(-)}=Q_{jt}^{\boldsymbol{\pi}_{\alpha}}(\theta^{-}_{\alpha})$, and $Q_{a}^{\boldsymbol{\pi}_{\alpha}}=Q_{a}(\tau_t^a, u_t^a), a \in \{1,\ldots,n\}$ for brevity.

\begin{algorithm}[tb]
    \caption{Add-HPF-WQ for MARL}
    \label{algo}
    \textbf{Require}: $\theta_{\alpha}$, $\theta_{\beta}$, $\theta^{-}_{\alpha}$, $\theta^{-}_{\beta}$ \\
    \textbf{Initialize} $\theta^{-}_{\alpha} \leftarrow \theta_{\alpha}$, $\theta^{-}_{\beta} \leftarrow \theta_{\beta}$, $\zeta$, $\mathcal{D}$, $b$, step=0, update interval=$m$
    \begin{algorithmic}[1] 
        \WHILE{$\text{step} < \text{step}_{max}$}
            \STATE $t=0, s_{0}= \text{initial state}$;
            \WHILE{$s_{t} \ne terminal$ and $t < \text{episode limit}$}
                \FOR{each agent $a$}
                    \STATE $\tau_{t}^{a} = \tau_{t-1}^{a} \cup \{(o_{t}, u_{t-1})\}$;
                    \STATE Select action $u^{a}_{t}$ by $\epsilon$-greedy scheme;
                    \STATE Get the utilities $Q^{\boldsymbol{\pi}_\alpha}_{a}, Q^{\boldsymbol{\pi}_\beta}_{a}, a \in \{1, \ldots, n\}$;
                    \STATE Obtain $\boldsymbol{Q}^{\boldsymbol{\pi}_{\alpha}}$ and $\boldsymbol{Q}^{\boldsymbol{\pi}_{\beta}}$ via Eq.(7) in the main text
                    \STATE Calculate $P_{\mathbf{w}}$ and sampling $\mathbf{u}_{t}$ via Eq.(5)-Eq.(6) in the main text
                \ENDFOR
                \STATE Get reward $r_{t}$ and next state $s_{t+1}$;
                \STATE $\mathcal{D}=\mathcal{D} \cup\left\{\left(s_{t}, \mathbf{u}_{t}, r_{t}, s_{t+1}\right)\right\}; t=t+1$;
                \STATE $\text{step}=\text{step}+1$;
            \ENDWHILE
            \IF{$|\mathcal{D}| > \text{batch-size}$}
                \STATE $b \leftarrow \text{select random batch of episodes from} \  \mathcal{D}$;
                \FOR{$t$ in each trajectory in batch $b$}
                    \STATE $Q_{jt}^{\boldsymbol{\pi}_{\alpha}}= \text{Mlp}\left(Q_{1}^{\boldsymbol{\pi}_{\alpha}}, \ldots, Q_{n}^{\boldsymbol{\pi}_{\alpha}},  s_{t}\right)$;
                    \STATE $Q_{tot}^{\boldsymbol{\pi}_{\beta}}= \text{Mix}\left(Q_{1}^{\boldsymbol{\pi}_{\beta}}, \ldots, Q_{n}^{\boldsymbol{\pi}_{\beta}},  s_{t}\right)$;
                    \STATE Estimate targets $Q_{jt}^{\boldsymbol{\pi}_{\alpha}(-)}$ and $Q_{tot}^{\boldsymbol{\pi}_{\beta}(-)}$;
                \ENDFOR
                \STATE Update the joint action value function:
                \STATE $\Delta Q_{jt}^{\boldsymbol{\pi}_{\alpha}}=Q_{jt}^{\boldsymbol{\pi}_{\alpha}(-)}-Q_{jt}^{\boldsymbol{\pi}_{\alpha}}$,
                \STATE $\Delta \theta_{\alpha}=\nabla_{\theta_{\alpha}}\left(\Delta Q_{jt}^{\boldsymbol{\pi}_{\alpha}}\right)^{2}$, $\theta_{\alpha}=\theta_{\alpha} - \zeta \Delta \theta_{\alpha}$;
                \STATE $\Delta Q_{tot}=Q_{tot}^{\boldsymbol{\pi}_{\beta}(-)}-Q_{tot}^{\boldsymbol{\pi}_{\alpha}}$,
                \STATE $\Delta \theta_{\beta}=\nabla_{\theta_{\beta}}\left(\Delta Q_{tot}^{\boldsymbol{\pi}_{\beta}}\right)^{2}$, $\theta_{\beta}=\theta_{\beta} - \zeta \Delta \theta_{\beta}$;
            \ENDIF
            \IF{$\text{step} \ \% \ m = 0$}
                \STATE Update the target network $\theta^{-}_{\alpha} \leftarrow \theta_{\alpha}, \theta^{-}_{\beta} \leftarrow \theta_{\beta}$;
            \ENDIF
        \ENDWHILE
    \end{algorithmic}
\end{algorithm}

\section{Experiment and Hyperparameters Setting}\label{app:C}
In this section, we thoroughly describe the experimental setups and parameter configurations for the algorithms employed in this paper. The following subsections elaborate on the specific objectives of each experimental scenario and the neural network parameters of the compared algorithms.
\subsection{Description of the experimental scenarios}
\noindent {\bf Matrix Game.} The matrix game is a simple mathematical model used to study the interaction and decision-making scenarios of two agents in a shared environment. In this setting, each agent has three possible actions, and their choices result in simultaneous outcomes according to a shared payoff function. This symmetric matrix game model exhibits an optimal joint action $(u_1, u_1)$, where agents possess an observation with all values set to 1, indicating that agents solely consider interactive actions to impact specific cooperative rewards. Agents can only choose one action per round of interaction. This model allows for a rapid evaluation of whether the tested algorithms satisfy the IGM condition and whether the centralized value function under the joint actions of agents can accurately estimate the value functions.

\noindent {\bf Predator-Prey.} The \emph{relative-overgeneralization} is formulated by a predator-prey task that is modeled in a partially observable grid-world: 8 agents try to coordinate to hunt 8 prey in a $10 \times 10$ grid. Each agent possesses a $5 \times 5$ sub-grid sight range. The prey will be caught if two adjacent predators are executing the \emph{catch} action. The predators are rewarded $r=10$ when they capture prey successfully but get punished reward $p=-2$ when only a single agent attempts to capture any prey. Collaboration among the predators is essential to capture all prey successfully and prevent incurring penalties.

\noindent {\bf StarCraft Multi-Agent Challenge.} StarCraft \uppercase\expandafter{\romannumeral2}, being a real-time strategy game, presents an excellent opportunity to address cooperative challenges in the multi-agent domain. SMAC\cite{samvelyan_starcraft_2019} utilizes Blizzard's StarCraft \uppercase\expandafter{\romannumeral2} Machine Learning API and DeepMind's PySC2 as an interface to engage with the game environment. Our focus lies on the micromanagement challenges within SMAC, where each unit is under the control of an independent agent, relying solely on limited local observation. These unit groups must be trained to engage an opposing army under the centralized control of the game's built-in scripted AI. The objective of cooperative allies is to maximize damage dealt to enemy units while minimizing received damage. Symmetrically placed opposing groups with varied initial unit placements characterize the scenario across episodes.

\subsection{Hyperparameters Setting}
For the sake of consistency, we integrate all algorithms in PyMARL and apply the same network architecture and hyperparameters to the contrasting methods. In alignment with PyMARL, the agent architecture comprises a DRQN with a recurrent layer, featuring a 64-dimensional hidden state. Throughout the entire training process, each agent employs an independent $\epsilon$-greedy exploration action selection based on its own $Q_{i}$. The $\epsilon$ value is annealed from 1.0 to 0.05 over 50$k$ time steps and remains constant thereafter (100$k$ for super-hard scenarios 6h\_vs\_8z and 3s5z\_vs\_3s6z). The discount factor, $\gamma = 0.99$, remains constant across all experiments. RMSprop with $\alpha=0.99$ is used without incorporating weight decay or momentum. The learning rate for training is consistently set to $5\times10^{-4}$ in all testing scenarios in Predator-Prey and baselines in SMAC. We apply layer-normalization to the observations, and to ensure stability when updating the composite policy $\Pi=\left[\boldsymbol{\pi}_\alpha, \boldsymbol{\pi}_\beta \right]$ with sampled trajectories, the Adam optimizer was utilized in the HPF scheme with the default setting, with the temperature $\eta=1$ in candidate policies sampling in HPF.
\footnote{We use the SC2.4.10 version of SMAC through all the testing scenarios.}

As for The Matrix Game, all agents adhere to full exploration, i.e., $\epsilon=1$, and remain unchanged during training. The WQMIX baseline and the WQMIX policy in HPF employ a sub-optimal joint action weight of $w=0.1$, while in the SMAC environment, it opts for $w=0.75$ from the range $w=(0.5, 0.75)$. While the QPLEX baseline and the QPLEX policy in HPF enjoy the same setting as in its open-source code for fair comparison. 

Training halts every 10000-time steps for 16 evaluation episodes. The most recent 5000 training episodes are maintained in the replay buffer, and batches of 32 episodes are uniformly sampled for training. All agent networks share the parameters of networks, and we update the target networks every 200 episodes. To ensure fairness in comparing the performance, all other parameters of the baselines are consistent with their corresponding settings in the official code bases.

\begin{figure}[!b]
    \centering
    \includegraphics[width=0.7\linewidth]{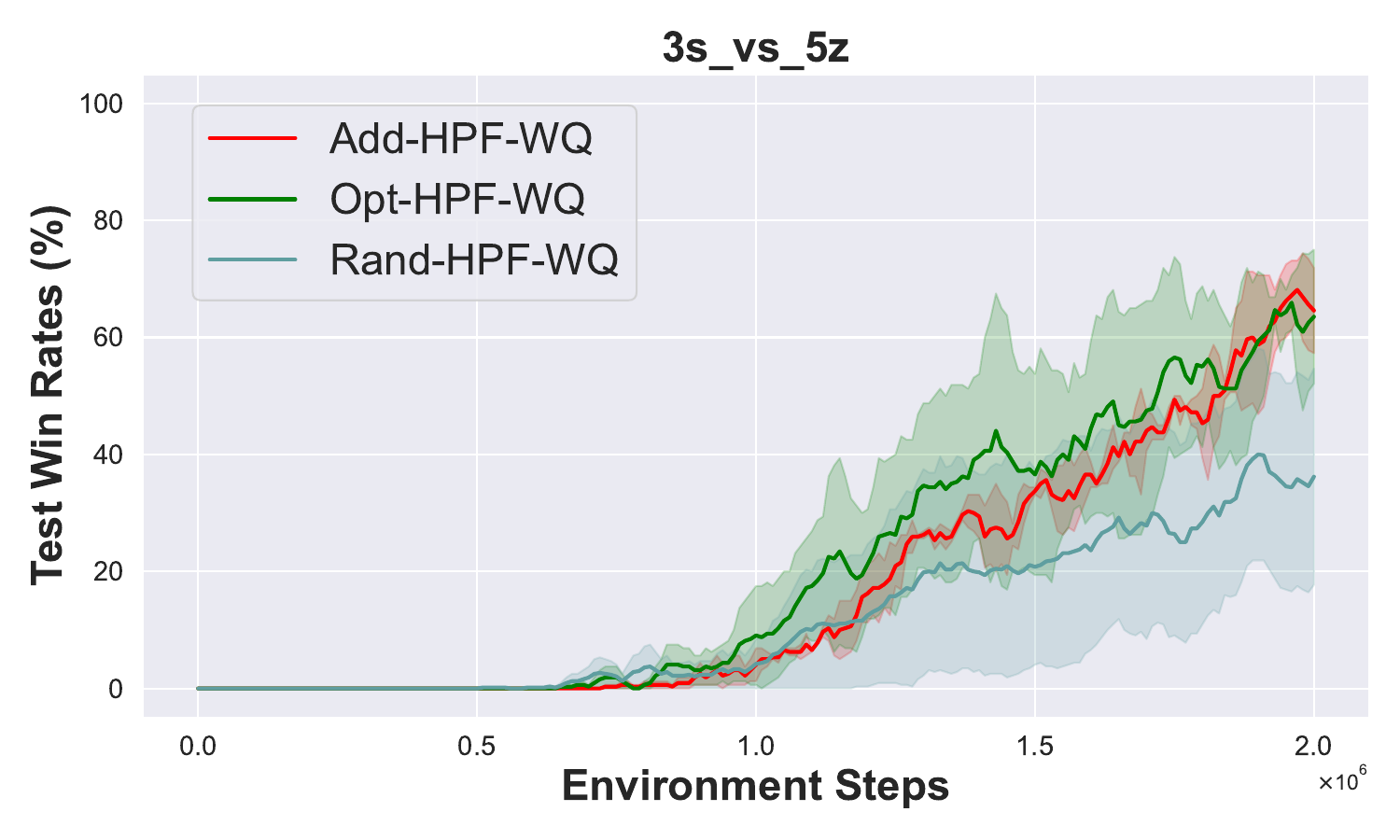}
    \includegraphics[width=0.7\linewidth]{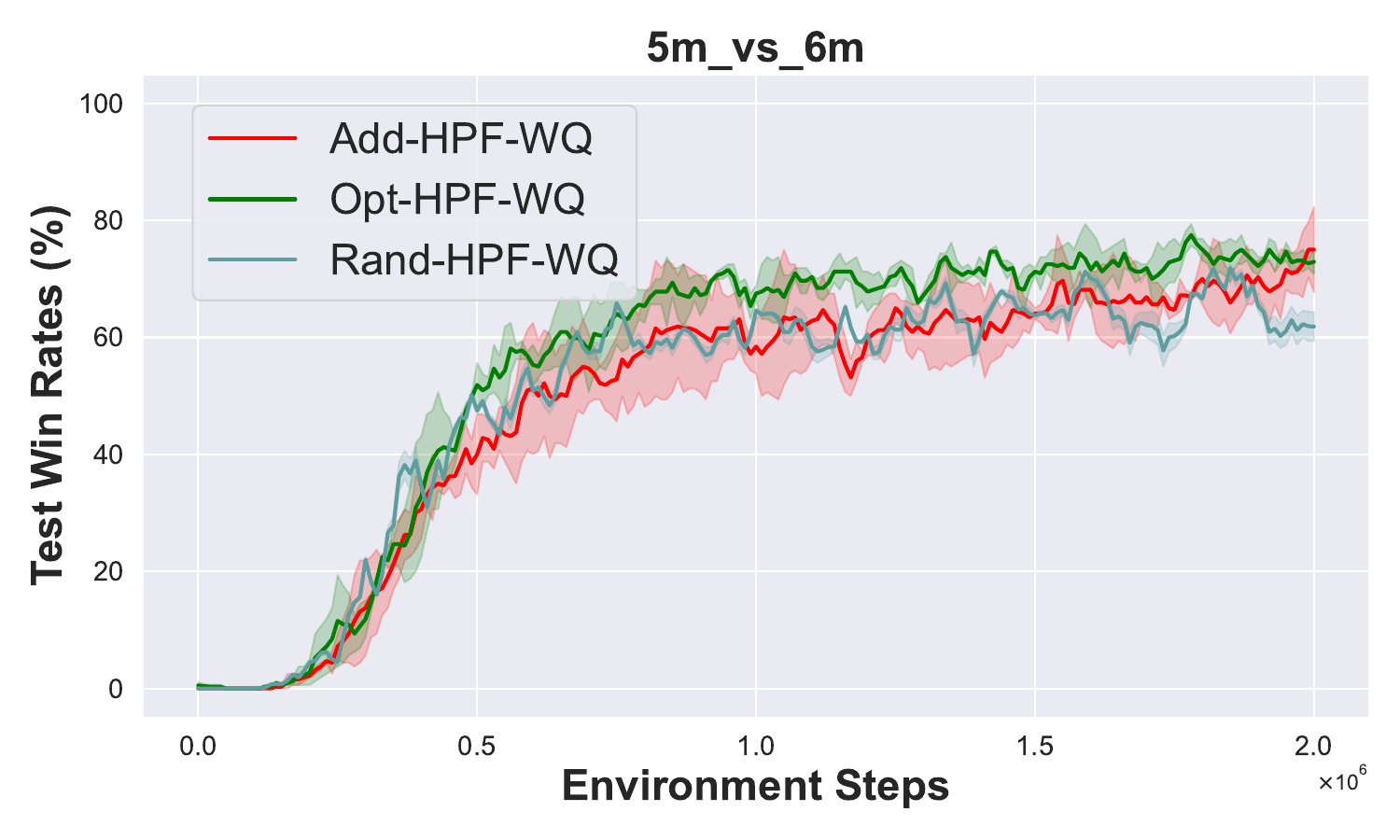}
    \includegraphics[width=0.7\linewidth]{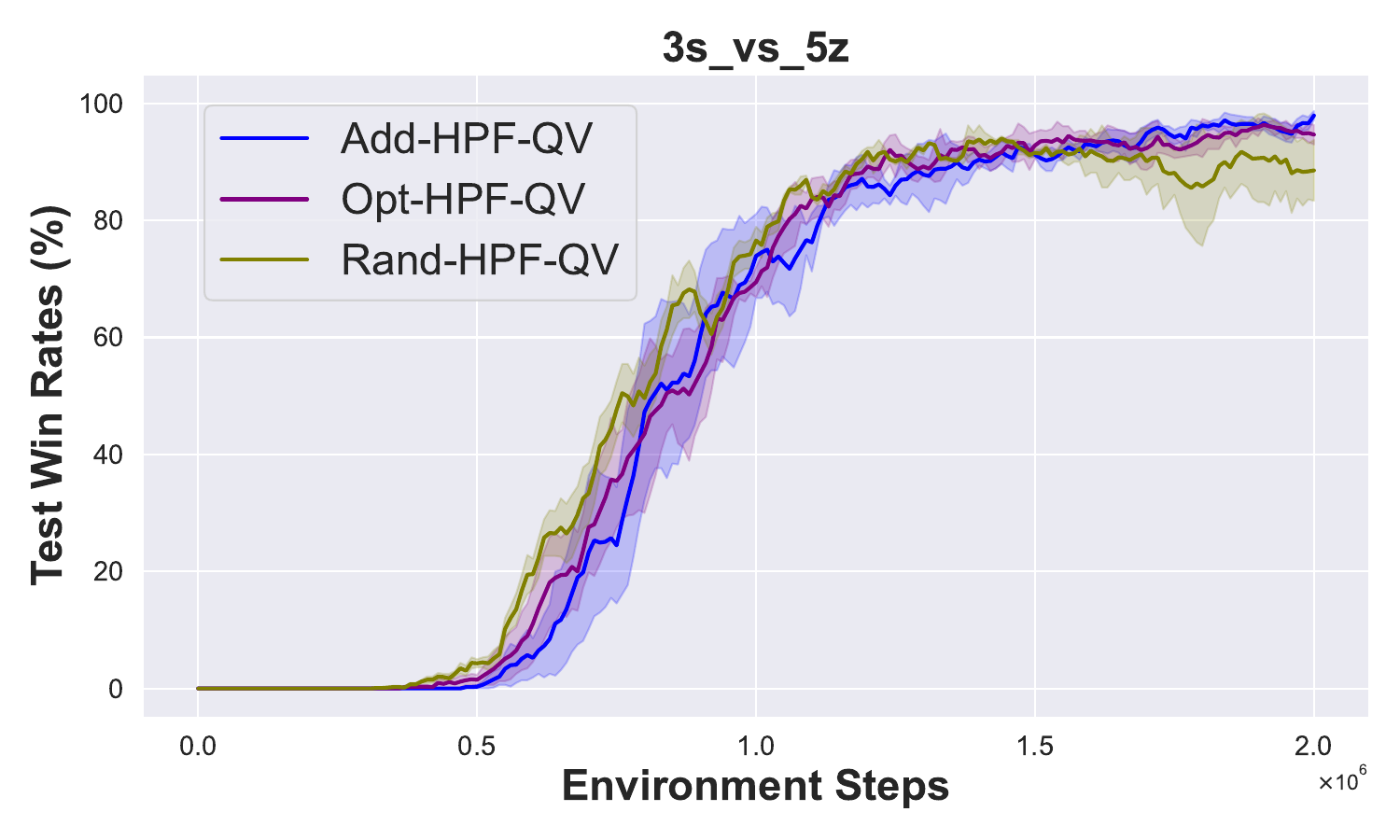}
    \includegraphics[width=0.7\linewidth]{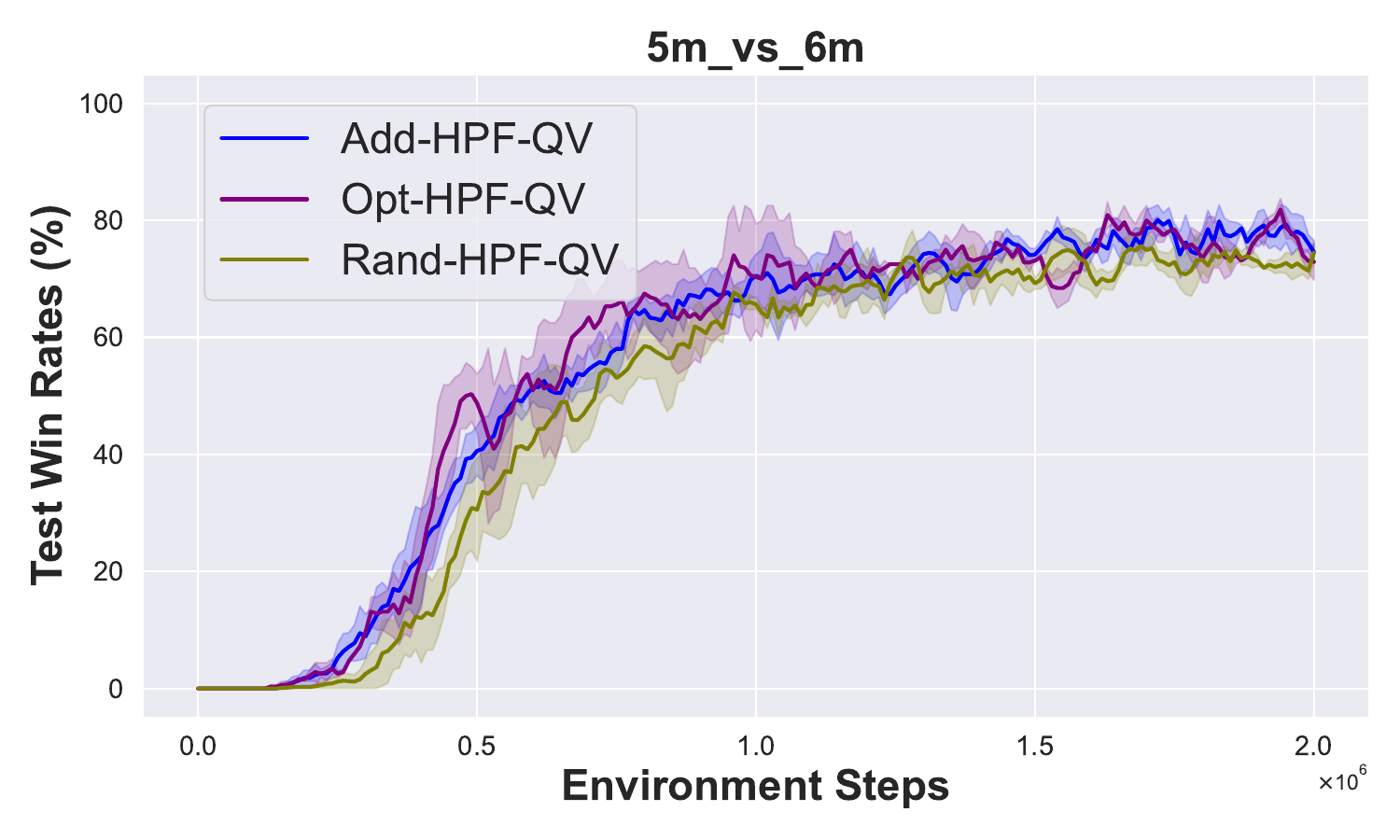}
    \caption{Ablation studies of the random candidate VD policy sampling in HPF.}
    \label{ab_results}
\end{figure}

\begin{figure}[!b]
    \centering
    \includegraphics[width=0.8\linewidth]{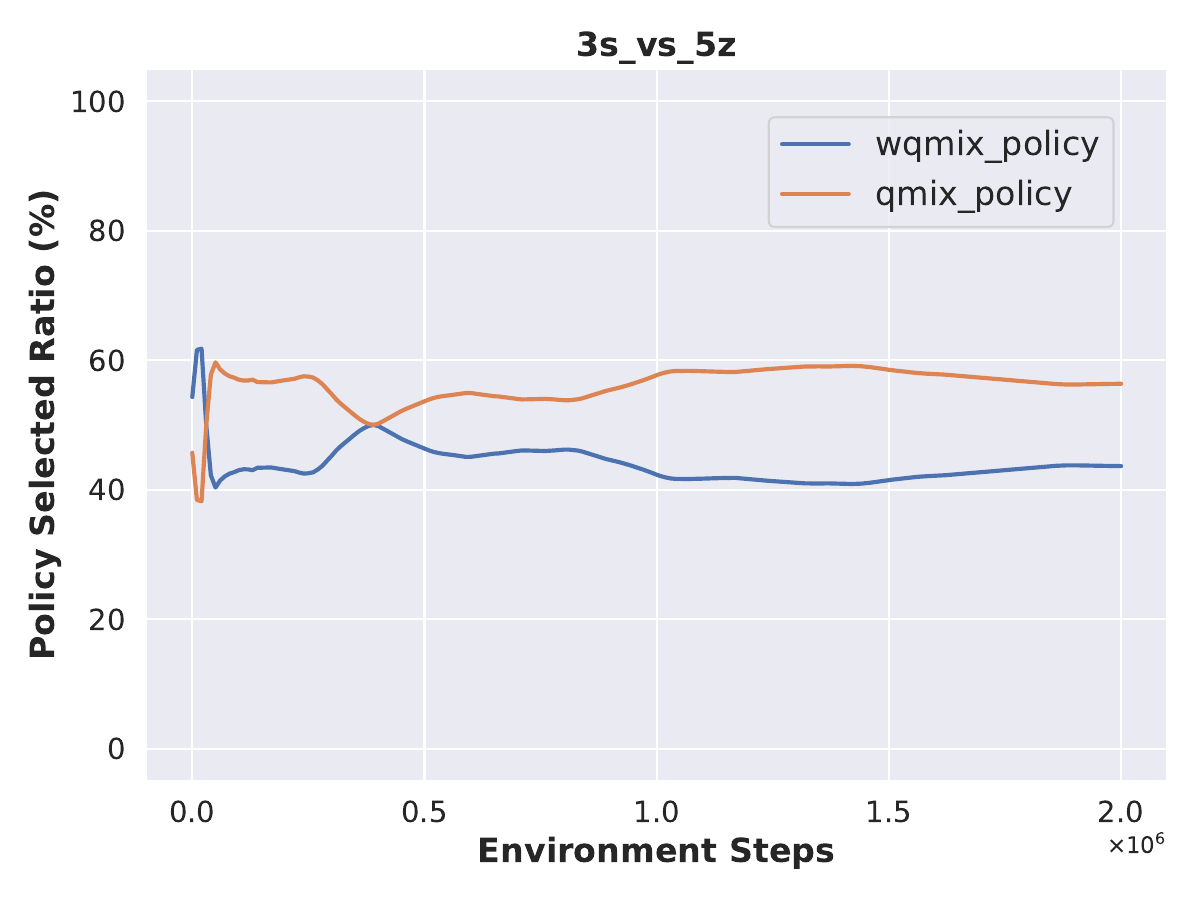}
    \includegraphics[width=0.8\linewidth]{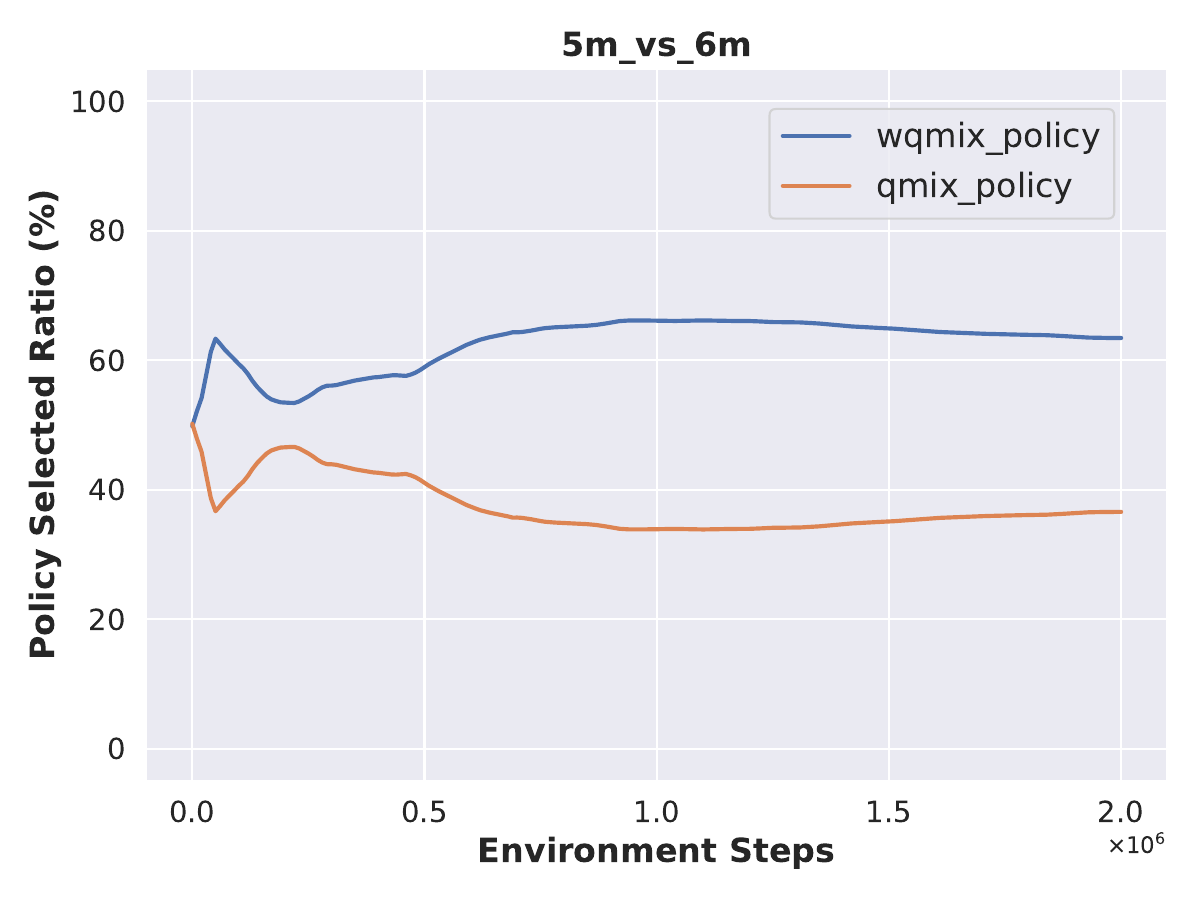}
    \includegraphics[width=0.8\linewidth]{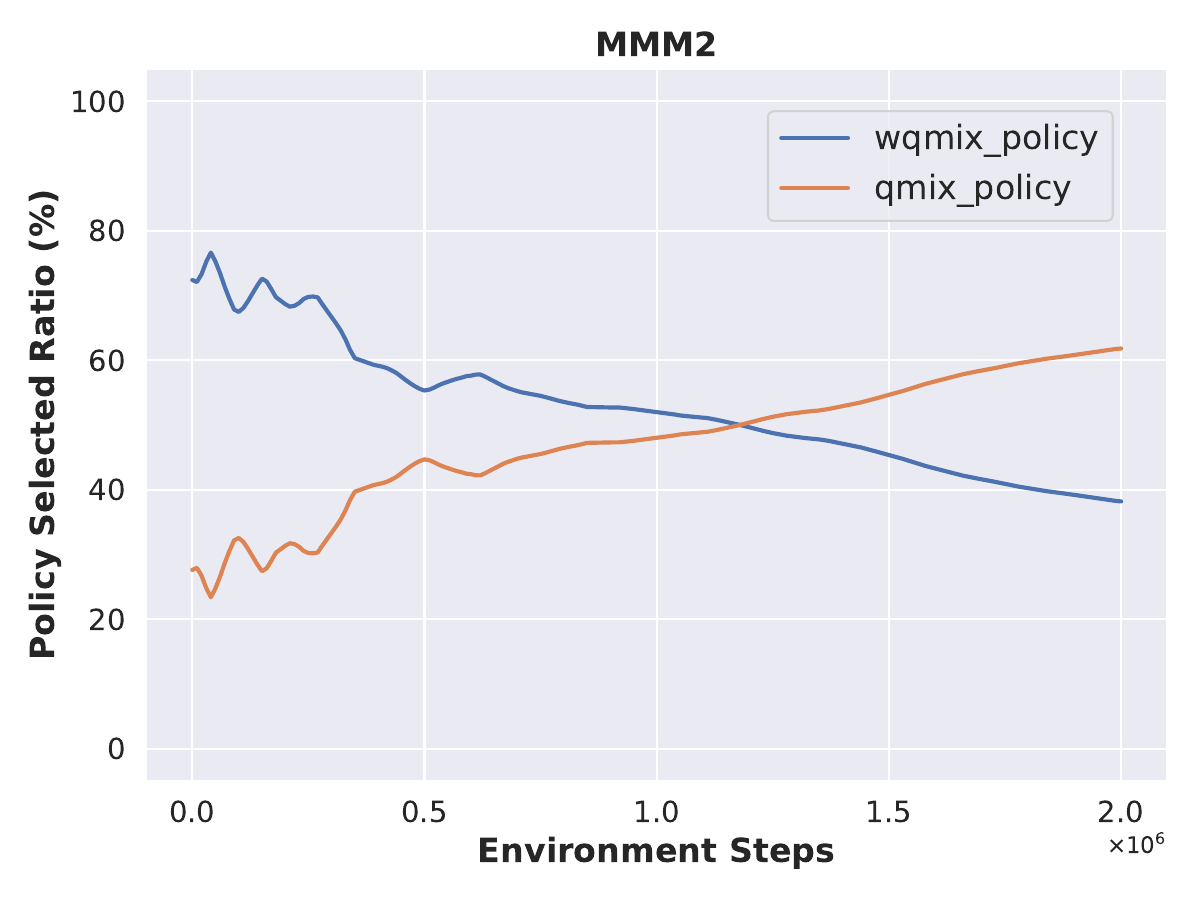}
    \caption{The selection probabilities of different VD policies in HPF along with training process.}
    \label{ab_selection}
\end{figure}

\section{Additional Experimental Results}\label{app:D}
In this section, we supplement the ablation experiments omitted in the main text. Specifically, we replace the sampling method of the composite policy in the HPF scheme with random sampling to verify the importance of utilizing value estimates in VD methods as sampling weights in HPF. Additionally, we record the sampling frequency of heterogeneous VD methods within HPF across different scenarios, attempting to explain which type of VD method played a greater role in model updates.

\subsection{Omitted Ablation Studies}
This subsection provides the ablation experimental results of a random sampling of the heterogeneous VD method in the HPF scheme. As illustrated in Figure \ref{ab_results}, when HPF adopts random sampling to create a composite policy, its performance in the 3s\_vs\_5z and 5m\_vs\_6m scenarios is inferior to the approach proposed in this paper, which utilizes the sampling from a Boltzmann policy based on value function estimates of the heterogeneous VD method, as depicted by Eq.(5)-Eq.(6) in the main text. Since a greater estimate implies that the currently sampled VD method has a higher cumulative expected return, which makes the data obtained by the interaction of this method with the environment have a higher potential value for the model update. This indicates that the composite policy of HPF is capable of effectively integrating the merits of the heterogeneous VD method and thereby acquiring better interaction experiences, consequently enhancing the training efficiency of the model.

\subsection{Additional Analysis}
The main purpose of this subsection is to analyze which type of VD policy is more likely to be chosen in HPF, providing insights into the impact of interaction data generated by different VD methods on the model. We recorded the probability of sampling a certain type of VD policy during model training in Add-HPF-WQ, as illustrated in Figure \ref{ab_selection}. Here, \textit{wqmix\_policy} represents the probability of choosing the WQMIX policy, while \textit{qmix\_policy} represents the probability of choosing the QMIX policy. The selection of VD policies by the HPF scheme varies across different cooperative tasks. However, a general observation is that in tasks where the WQMIX policy consistently learns effective collaborative behaviors, such as 5m\_vs\_6m, HPF tends to choose and maintain the WQMIX policy. While in the tasks where obtaining effective information from WQMIX interaction experiences proves challenging, HPF gradually leans towards selecting the QMIX policy for interaction, thereby enhancing the training effectiveness of the model. For instance, in the case of 3s\_vs\_5z, the probability of choosing the WQMIX policy increases at the beginning. Still, as it fails to learn suitable cooperative behaviors, HPF gradually shifts towards selecting the QMIX policy for interaction. In more complex scenarios such as MMM2, HPF progressively increases the probability of selecting the QMIX policy for interaction. This suggests that the HPF's approach of extending VD methods can effectively perceive the task complexity of the environment and adaptively adjust the sampling probability of interaction, enabling the model to learn superior collaborative policies.

\end{document}